\def\BibTeX{{\rm B\kern-.05em{\sc i\kern-.025em b}\kern-.08em T\kern-.1667em\lower.7ex\hbox{E}\kern-.125emX}}
\newtheorem{proposition}{Proposition}
\newtheorem{lemma}[proposition]{Lemma}
\newtheorem{theorem}[proposition]{Theorem}
\theoremstyle{definition}
\newtheorem{assumption}[proposition]{Assumption}
\theoremstyle{remark}
\newcommand{\oprocendsymbol}{\rule{.7em}{.7em}}
\newcommand{\oprocend}{\relax\ifmmode\else\unskip\hfill\fi\oprocendsymbol}
\newcommand{\col}{\mathrm{col}}
\newcommand{\Diag}{\mathrm{Diag}}
\newcommand{\mat}[1]{\begin{bmatrix}#1\end{bmatrix}}
\newcommand{\neatmat}[1]{\text{\small$\mat{#1}$}}
\newcommand{\compactmat}[1]{\text{\footnotesize$\mat{#1}$}}
\newcommand{\smallmat}[1]{\begin{bsmallmatrix}#1\end{bsmallmatrix}}
\newcommand{\compact}[1]{\text{\footnotesize$#1$}}
\newcommand{\Hank}{\mathcal{H}}
\newcommand{\Toep}{\mathrm{Toep}}
\newcommand{\real}{\mathbb{R}}
\newcommand{\integer}{\mathbb{Z}}
\renewcommand{\natural}{\mathbb{N}}
\newcommand{\symmetric}{\mathbb{S}}
\newcommand{\wideminus}{{\scalebox{1.5}[1] -}}
\newcommand{\transpose}{{\sf T}}
\newcommand{\auxiliary}{{\sf aux}}
\newcommand{\original}{{\sf orig}}
\newcommand{\initial}{{\sf ini}}
\newcommand{\data}{{\rm d}}
\newcommand{\dash}{\text{-}}
\newcommand\thickbar[1]{\accentset{\rule{.4em}{.8pt}}{#1}}
\DeclareMathOperator*{\minimize}{minimize}
\newcommand{\tb}{\color{blue}}
\newcommand{\tgrey}{\color{black!40}}
\renewcommand{\tb}{\color{black}}
\renewcommand{\tgrey}{\color{black}}
\newcommand{\Versions}[2]{#2}
\begin{document}

\title{Distributionally Robust Stochastic Data-Driven Predictive Control \\ with Optimized Feedback Gain}

\author{Ruiqi Li,  John W. Simpson-Porco, and Stephen L.\ Smith
\thanks{This research is supported in part by the Natural Sciences and Engineering Research Council of Canada (NSERC).}%
\thanks{Ruiqi Li and Stephen L. Smith are with the Electrical and Computer Engineering at the University of Waterloo, Waterloo, ON, Canada
{\tt\small \{r298li,stephen.smith\}@uwaterloo.ca}}%
\thanks{John W. Simpson-Porco is with the Department of Electrical and Computer Engineering at the University of Toronto, Toronto, ON, Canada
{\tt\small jwsimpson@ece.utoronto.ca}}
}

\maketitle
\thispagestyle{empty}
\pagestyle{empty}

\setlength{\abovedisplayskip}{.5em}
\setlength{\belowdisplayskip}{.5em}
\begin{abstract}
    We consider the problem of direct data-driven predictive control for unknown stochastic linear time-invariant (LTI) systems with partial state observation. Building upon our previous research on data-driven stochastic control, this paper (i) relaxes the assumption of Gaussian process and measurement noise, and (ii) enables optimization of the gain matrix within the affine feedback policy. Output safety constraints are modelled using conditional value-at-risk, and enforced in a distributionally robust sense. Under idealized assumptions, we prove that our proposed data-driven control method yields control inputs identical to those produced by an equivalent model-based stochastic predictive controller. A simulation study illustrates the enhanced performance of our approach over previous designs.
\end{abstract}

\section{Introduction}

Model predictive control (MPC) is a widely used technique for multivariate control \cite{MPC:mayne2014}, adept at handling constraints on inputs, states and outputs while optimizing complex performance objectives.
\Versions{}{Constraints typically model actuator limits, or encode safety constraints in safety-critical applications, and }MPC employs a system model to predict how inputs influence state evolution.
\Versions{W}{Both deterministic and stochastic frameworks have been developed to account for plant uncertainty in MPC. While \emph{Robust MPC} \cite{RMPC:bemporad2007} approaches model uncertainty in a worst-case deterministic sense, w}ork on \emph{Stochastic MPC (SMPC)} \cite{SMPC:mesbah2016} has focused on describing model uncertainty probabilistically.
SMPC methods optimize over feedback control policies rather than control actions\Versions{}{, resulting in performance benefits when compared to the na\"ive use of deterministic MPC \cite{SMPC_MPC:kumar2019},}
and accommodate probabilistic and risk-aware constraints.

The system model required by MPC (and SMPC) is sometimes obtained from identification, making MPC an \emph{indirect} design method, since one goes from data to a controller through an intermediate modelling step \cite{DDC:dorfler2023}. In contrast, data-driven or \emph{direct} methods seek to compute controllers directly from input-output data\Versions{.}{, showing promise for complex or difficult-to-model systems \cite{DDC:Hou2013}.}
Accounting for constraints in control, \emph{Data-Driven Predictive Control (DDPC)} methods were developed, including Data-Enabled Predictive Control (DeePC) \cite{DeePC:coulson2019a, DeePC:coulson2019b, DeePC:coulson2021} and Subspace Predictive Control (SPC) \cite{SPC:huang2008}\Versions{}{, both of which have been applied in multiple experiments \cite{DeePCApp:elokda2021quadcopters, DeePCApp:carlet2020motorDrives, DeePCApp:huang2021oscillationDamping}}.
For \emph{deterministic} LTI systems in theory, both DeePC and SPC produce equivalent control actions as from MPC. 

Real-world systems often deviate from idealized deterministic LTI models, exhibiting stochastic and non-linear behavior, with noise-corrupted data.
To address these challenges, data-driven methods must account for noisy data and measurements. 
For instance, in SPC applications, required predictor matrices are often computed using denoising techniques \cite{SPC:huang2008}.
Regularized and distributionally robust DeePC were also developed for stochastic systems \cite{DeePC:coulson2019a, DeePC:coulson2019b, DeePC:coulson2021}.
Unlike in the deterministic case, however, these stochastic adaptations of DeePC and SPC lack theoretical equivalence to model-based MPC.

Recognizing this gap, some recent advancements in DDPC have aimed to establish equivalence with MPC methods for stochastic systems.
The work in \cite{PCE:pan2022b, PCE:pan2022a} proposed a DDPC framework for stochastic systems, and their method performs equivalently to SMPC if stochastic signals can be exactly expressed by their Polynomial Chaos Expansion.
This paper builds in particular on our previous work \cite{SDDPC}, where we proposed a data-driven control method for stochastic systems, {\tb without estimation of disturbance as required in \cite{PCE:pan2022a, PCE:pan2022b}}, and established that the method has equivalent control performance to SMPC when offline data is noise-free.
\Versions{{\tb An extended version of the paper can be found in \cite{EXTENDED}.}}{}

\emph{Contribution:} This paper contributes towards the continued development of high-performance DDPC methods for stochastic systems. Specifically, in this paper we develop a stochastic DDPC strategy utilizing distributionally robust conditional value-at-risk constraints, providing an improved safety constraint description when compared to our prior work in~\cite{SDDPC}, and providing robustness against non-Gaussian (i.e., possibly heavy-tailed) process and measurement noise. Additionally, in contrast with the fixed feedback gain in~\cite{SDDPC}, we consider control policies where feedback gains are decision variables in the optimization, giving a more flexible parameterization of control policies.
As theoretical support for the approach, under technical conditions, we establish equivalence between our proposed design and a corresponding SMPC. Finally, a simulation case study compares and contrasts our design with other recent stochastic and data-driven control strategies.

\emph{Notation:} 
Let $M^\dagger$ be the pseudo-inverse of a matrix $M$. Let $\otimes$ denote the Kronecker product. Let $\symmetric^q_+$ (resp. $\symmetric^q_{++}$) be the set of $q\times q$ positive semi-definite (resp. definite) matrices. 
Let $\col(M_1, \ldots, M_k)$ (resp. $\Diag(M_1, \ldots, M_k)$) denote the vertical (resp. diagonal) concatenation of matrices / vectors $M_1, \ldots, M_k$.
Let $\integer_{[a,b]} := [a,b] \cap \integer$ denote a set of consecutive integers from $a$ to $b$, and let $\integer_{[a,b)} := \integer_{[a,b-1]}$.
For a $\real^q$-valued discrete-time signal $z_t$ with integer index $t$, let $z_{[t_1, t_2]}$ denote either a sequence $\{z_t\}_{t=t_1}^{t_2}$ or a concatenated vector $\col(z_{t_1}, \ldots, z_{t_2}) \in \real^{q(t_2-t_1+1)}$ where the usage is clear from the context; similarly, let $z_{[t_1,t_2)} := z_{[t_1,t_2-1]}$.
A matrix sequence $\{M_t\}_{t=t_1}^{t_2}$ and a function sequence $\{\pi_t(\cdot)\}_{t=t_1}^{t_2}$ are denoted by $M_{[t_1,t_2]}$ and $\pi_{[t_1,t_2]}$ respectively.

\section{Problem Setup} \label{SECTION:problem}

Consider a stochastic linear time-invariant (LTI) system
\begin{align} \label{Eq:LTI}
    x_{t+1} = A x_t + B u_t + w_t, \quad
    y_t = C x_t + D u_t + v_t
\end{align}
with input $u_t \in \real^m$, state $x_t \in \real^n$,  output $y_t \in \real^p$, process noise $w_t \in \real^n$, and measurement noise $v_t \in \real^p$. 
The system $(A,B,C,D)$ is assumed as a minimal realization, but the matrices themselves are \emph{unknown} and the state $x_t$ is \emph{unmeasured}; we have access only to the input $u_t$ and output $y_t$ in \eqref{Eq:LTI}.
The probability distributions of $w_t$ and $v_t$ are \emph{unknown}, but we assume that $w_t$ and $v_t$ have zero mean and zero auto-correlation (white noise), are uncorrelated, and their variances $\Sigma^{\sf w} \in \symmetric^n_+$ and $\Sigma^{\sf v} \in \symmetric^p_{++}$ are known. The initial state $x_0$ has given mean $\mu^{\sf x}_\initial$ and variance $\Sigma^{\sf x}$ and is uncorrelated with the noise. We record these conditions as
\begin{align} 
\label{Eq:noise_meanvar} 
    & \mathbb{E} \big[ \neatmat{w_t \\[-.25em] v_t} \big] = 0, \quad
    \mathbb{E} \big[ \neatmat{w_t \\[-.25em] v_t} \neatmat{w_s \\[-.25em] v_s} \!\raisebox{.25em}{${}^\transpose$} \big] = \neatmat{\delta_{ts} \Sigma^{\sf w} & 0 \\[-.25em] 0 & \delta_{ts} \Sigma^{\sf v}}, \\
\label{Eq:initial_state_condition} 
    & \mathbb{E}[x_0] = \mu^{\sf x}_\initial, \quad
    \mathrm{Var}[x_0] = \Sigma^{\sf x}, \quad
    \mathbb{E} \big[ x_0 \neatmat{w_t \\[-.25em] v_t} \!\raisebox{.25em}{${}^\transpose$} \big] = 0,
\end{align}
with $\delta_{ts}$ the Kronecker delta. {\tb Assume $(A, \Sigma^{\sf w})$ is stabilizable.}

In a reference tracking control problem for \eqref{Eq:LTI}, the objective is for the output $y_t$ to follow a specified reference signal $r_t \in \real^p$. The trade-off between tracking error and control effort may be encoded in an instantaneous cost
\begin{align} \label{Eq:stage_cost}
    J_t (u_t, y_t) := \Vert y_t-r_t \Vert_Q^2 + \Vert u_t \Vert_R^2 
\end{align}
to be minimized over a time horizon, with user-selected parameters $Q \in \symmetric^p_+$ and $R \in \symmetric^m_{++}$. This tracking should be achieved subject to constraints on the inputs and outputs. 
We consider here polytopic constraints, which in a deterministic setting would take the form $E \neatmat{u_t\\[-.25em] y_t} \leq f$ for all $t \in \natural_{\geq 0}$, and for some fixed matrix $E \in \real^{q \times (m+p)}$ and vector $f \in \real^q$. We can equivalently express these constraints as the single constraint $h(u_t, y_t) \leq 0$, where
\begin{align} \label{Eq:constraint_function}
    h(u_t, y_t) := {\max}_{\, i \in \{1,\ldots,q\}} \; e^\transpose_i \neatmat{u_t \\[-.25em] y_t} - f_i,
\end{align}
with $e_i \in \real^{m+p}$ the transposed $i$-th row of $E$ and $f_i \in \real$ the $i$-th entry of $f$. For the system \eqref{Eq:LTI} which is subject to (possibly unbounded) stochastic disturbances, the deterministic constraint $h(u_t,y_t) \leq 0$ must be relaxed. 
Beyond a traditional chance constraint $\mathbb{P}[h(u_t, y_t) \leq 0] \geq 1 - \alpha$ with a violation probability $\alpha \in (0,1)$, a \emph{conditional value-at-risk (CVaR)} constraint is more conservative; the CVaR at level $\alpha$ of $h(u_t,y_t)$ is defined as the expected value of $h(u_t,y_t)$ in the $\alpha \cdot$ 100\% worst cases, and takes extreme violations into account. With the noise distributions unknown, we must further guarantee satisfaction of the CVaR constraint for all possible distributions under consideration.
Let $\mathbb{D}$ denote a joint distribution of all random variables in \eqref{Eq:LTI} satisfying \eqref{Eq:noise_meanvar} and \eqref{Eq:initial_state_condition}, and let the \emph{ambiguity set} $\mathcal{D}$ be the set of all such distributions.
The \emph{distributionally robust CVaR (DR-CVaR)} constraint \cite{VanParys2015, Zymler2013} is then
\begin{align} \label{Eq:DR_CVaR_Constraint}
    {\sup}_{\, \mathbb{D} \in \mathcal{D}} \; \mathbb{D}\dash\mathrm{CVaR}_\alpha [ h(u_t, y_t) ] \leq 0,
\end{align}
where $\mathbb{D}\dash\mathrm{CVaR}_\alpha [z]$ is the CVaR value of a random variable $z \in \real$ at level $\alpha$ given distribution $\mathbb{D}$.

If the system matrices $A, B, C, D$ were known, this constrained tracking control problem subject to \eqref{Eq:DR_CVaR_Constraint} can be approached using SMPC, as described in Section \ref{SECTION:theory:SMPC}. Our objective is to develop a data-driven control method that produces equivalent control inputs as produced by SMPC. 

\section{Stochastic Model-Based and Data-Driven Predictive Control} \label{SECTION:theory}

We introduce a model-based SMPC framework in Section \ref{SECTION:theory:SMPC} and propose a data-driven control method in Section \ref{SECTION:theory:SDDPC}, with their theoretical equivalence in Section \ref{SECTION:theory:equivalence}. 


\subsection{A framework of Stochastic Model Predictive Control} \label{SECTION:theory:SMPC}

We focus here on output-feedback SMPC \cite{OFSMPC:farina2015, OFSMPC:joa2023, OFSMPC:ridderhof2020} which typically combines state estimation and feedback control. 
The formulation here broadly follows our prior work \cite{SDDPC}, but we now consider a DR-CVaR constraint in place of chance constraints, and we will allow optimization over the feedback gain. 
This SMPC scheme merges the established works on DR constrained control \cite{VanParys2015, Zymler2013} and output-error feedback \cite{Goulart2007},  while the combined framework is part of our contribution.


\subsubsection{State Estimation}
SMPC follows a receding-horizon strategy and makes decisions for $N$ upcoming steps at each \emph{control step}.
At control step $t = k$, we begin with prior information of the mean and variance of state $x_k$, namely
\begin{align} \label{Eq:control_step_state_meanvar}
    \mathbb{E}[x_k] = \mu^{\sf x}_k, \qquad
    \mathrm{Var}[x_k] = \Sigma^{\sf x},
\end{align}
where the mean $\mu^{\sf x}_k$ is computed from a state estimator to be described next; at the initial step $k=0$, $\mu^{\sf x}_0 = \mu^{\sf x}_\initial$ is a given parameter as in \eqref{Eq:initial_state_condition}.
{\tb For simplicity of computation, we let $\Sigma^{\sf x}$ in \eqref{Eq:initial_state_condition} and \eqref{Eq:control_step_state_meanvar} be the steady-state variance through the Kalman filter, as the unique positive semi-definite solution to the associated discrete-time algebraic Riccatti equation (DARE) \eqref{Eq:Kalman_gain:state_variance}, with observer gain $L_{\sf L} \in \real^{n \times p}$ in \eqref{Eq:Kalman_gain:gains}.}
\begin{subequations} \tb \label{Eq:Kalman_gain} \begin{align}
\label{Eq:Kalman_gain:state_variance}
    & \Sigma^{\sf x} = (A - L_{\sf L} C) \Sigma^{\sf x} A^\transpose + \Sigma^{\sf w} \\
\label{Eq:Kalman_gain:gains}
    & L_{\sf L} := A \Sigma^{\sf x} C^\transpose(C \Sigma^{\sf x} C^\transpose + \Sigma^{\sf v})^{-1}
\end{align} \end{subequations}
Estimates $\hat x_t$ of future states over the desired horizon are computed through the observer, {\tb with \emph{innovation} $\nu_t \in \real^p$,}
\begin{subequations} \label{Eq:state_estimation} \begin{align}
    \label{Eq:state_estimation:innovation}
    \nu_t &:= y_t - C \hat x_t - D u_t, && t \in \integer_{[k,k+N)} \\
    \label{Eq:state_estimation:iteration}
    \hat x_{t+1} &:= A \hat x_t + B u_t + L_{\sf L} \nu_t, && t \in \integer_{[k,k+N)}
    \\
    \label{Eq:state_estimation:initial}
    \hat x_k &:= \mu^{\sf x}_k
\end{align} \end{subequations}
{\tb where we utilize in \eqref{Eq:state_estimation:iteration} the observer gain $L_{\sf L}$ in \eqref{Eq:Kalman_gain:gains} so that \eqref{Eq:state_estimation} is equivalent to the steady-state Kalman filter.}
\Versions{}{{\tgrey While the noise here is potentially non-Gaussian, the Kalman filter is the best affine state estimator in the mean-squared-error sense, regardless of the distributions of $x_k, w_t, v_t$ once their means and variances are specified as in \eqref{Eq:noise_meanvar} and \eqref{Eq:control_step_state_meanvar} \cite[Sec. 3.1]{Humpherys2012}.}}

At the control step with condition \eqref{Eq:control_step_state_meanvar}, we can predict future states and outputs by simulating the noise-free model,
\begin{subequations} \label{Eq:nominal_state} \begin{align}
    \label{Eq:nominal_state:recursion}
    \thickbar x_{t+1} &:= A \thickbar x_t + B \thickbar u_t, & t \in \integer_{[k,k+N)} \\
    \label{Eq:nominal_state:output}
    \thickbar y_t &:= C \thickbar x_t + D \thickbar u_t, & t \in \integer_{[k,k+N)} \\
    \label{Eq:nominal_state:initial}
    \thickbar x_k &:= \mu^{\sf x}_k
\end{align} \end{subequations}
with \emph{nominal inputs} $\thickbar u_t$ as decision variables to be optimized, and with resulting \emph{nominal states} $\thickbar x_t$ and \emph{nominal outputs} $\thickbar y_t$.


\subsubsection{Feedback Control Policies}
\Versions{}{Our prior work \cite{SDDPC} was based on an affine feedback policy $u_t = \thickbar u_t - K (\hat x_t - \thickbar x_t)$ with a fixed feedback gain $K$.
Here we investigate control policies where the feedback gain is a time-varying decision variable. However, the naive parameterization
\begin{align} \label{Eq:feedback_policy_output_feedback}
    u_t \gets \thickbar u_t - K_t (\hat x_t - \thickbar x_t)
\end{align}
leads to non-convex bilinear terms of the decision variables $\thickbar u$ and $K_t$, as $\hat x_t, \thickbar x_t$ depend on $\thickbar u_{[k,t)}$ via \eqref{Eq:state_estimation}, \eqref{Eq:nominal_state}.}
\Versions{While \cite{SDDPC} considered an affine feedback policy with fixed feedback gain, here we apply an \emph{output error feedback} control policy \cite{Goulart2007}}{Thus, we instead apply an \emph{output error feedback} control policy \cite{Goulart2007}}
\begin{align} \label{Eq:feedback_policy}
    u_t \gets \pi_t ( \nu_{[k,t)} ) := \thickbar u_t + \textstyle{\sum_{s=k}^{t-1}} \, M^s_t \, \nu_s
\end{align}
where the nominal input $\thickbar u_t$ and feedback gains $M^s_t \in \real^{m\times p}$ are both decision variables, with innovation $\nu$ in \eqref{Eq:state_estimation:innovation}.
\Versions{}{The policy parameterization \eqref{Eq:feedback_policy} contains within it the policy \eqref{Eq:feedback_policy_output_feedback} as a special case: indeed, for a sequence of gains $K_{[k,k+N)}$, the selection {\tb for all $s,t \in \integer_{[k,k+N)}, s \leq t$}
\begin{align*}
    M^s_t \gets (A - B K_{t-1}) (A - B K_{t-2}) \cdots (A - B K_s) L_{\sf L}
\end{align*}
reduces \eqref{Eq:feedback_policy} to \eqref{Eq:feedback_policy_output_feedback}.}
Crucially, \eqref{Eq:feedback_policy} leads to jointly convex optimization in decision variables $\thickbar u, M^s_t$, as we will see next.

With the estimator \eqref{Eq:state_estimation} and policy \eqref{Eq:feedback_policy}, both input $u_t$ and output $y_t$ of \eqref{Eq:LTI} can be written as affine functions of the decision variables, through direct calculation, with $\thickbar y$ in \eqref{Eq:nominal_state}, 
\begin{align} \label{Eq:IO_wrt_independent_RV}
    \neatmat{u_t \\[-.25em] y_t} = \neatmat{\thickbar u_t \\[-.25em] \thickbar y_t} + \Lambda_t \, \eta_k, \quad t \in \integer_{[k,k+N)},
\end{align}
where $\eta_k := \col(x_k - \mu^{\sf x}_k, w_{[k,k+N)}, v_{[k,k+N)}) \in \real^{n_\eta}$
is a vector of uncorrelated zero-mean random variables of dimension $n_\eta := n+nN+pN$, and matrix $\Lambda_t \in \real^{(m+p) \times n_\eta}$ is linearly dependent on the gain matrices $M^s_t$ as
\begin{align} \label{Eq:Lambda_definition}
    \Lambda_t := \compactmat{\Delta^{\sf U}_{t-k} \\[.2em] \Delta^{\sf Y}_{t-k}} \mathcal{M} \, \Delta^{\sf M} + \compactmat{0_{m\times n_\eta} \\[.2em] \Delta^{\sf A}_{t-k}}, \quad t \in \integer_{[k,k+N)},
\end{align}
where $\mathcal{M} \in \real^{mN\times pN}$ is a concatenation of $M^s_t$
\begingroup
\setlength{\abovedisplayskip}{.2em}
\setlength{\belowdisplayskip}{.3em}
\begin{align} \label{Eq:M_concatenation}
    \mathcal{M} := \compactmat{M^k_k \\ M^k_{k+1} & M^{k+1}_{k+1} \\[-.3em] \vdots & \vdots & \ddots \\ M^k_{k+N-1} & M^{k+1}_{k+N-1} & \cdots & M^{k+N-1}_{k+N-1}}
\end{align}
\endgroup
and where $\Delta^{\sf U}_i \in \real^{m \times mN}, \Delta^{\sf Y}_i \in \real^{p \times mN}, \Delta^{\sf A}_i \in \real^{p \times n_\eta}$ and $\Delta^{\sf M} \in \real^{pN \times n_\eta}$ are independent of both decision variables $\thickbar u$ and $M^s_t$, with expressions available in \ref{APPENDIX:Delta_Lambda_definition}.

\subsubsection{Deterministic Formulation of Cost and Constraint}
Given \eqref{Eq:IO_wrt_independent_RV}, $\col(u_t, y_t)$ has mean $\col(\thickbar u_t, \thickbar y_t)$ and variance $\Lambda_t \Sigma^\eta \Lambda^\transpose_t$, since $\eta_k$ has zero mean and the variance $\Sigma^\eta := \Diag(\Sigma^{\sf x}, I_N\otimes\Sigma^{\sf w}, I_N\otimes\Sigma^{\sf v}) \in \symmetric^{n_\eta}_+$ via \eqref{Eq:noise_meanvar} and \eqref{Eq:control_step_state_meanvar}.
Then, the constraint \eqref{Eq:DR_CVaR_Constraint} can be equivalently written as a second-order cone (SOC) constraint of the decision variables $\thickbar u$ and $M^s_t$.

\begin{lemma}[SOC Expression of DR-CVaR Constraint\Versions{ \cite{EXTENDED}}{}] \label{LEMMA:WorstCaseCVaR_SOCP_form}
    With $h(u_t, y_t)$ as in \eqref{Eq:constraint_function}, for $t \in \integer_{[k,k+N)}$, \eqref{Eq:DR_CVaR_Constraint} holds iff
\begin{align} \label{Eq:CVaR_Constraint_SOCP_form}
    2 {\textstyle \big( \frac{1-\alpha} \alpha} \big)\!^{\frac12} \big\Vert (\Sigma^\eta)\!^{\frac12} \Lambda_t^\transpose e_i \big\Vert_2 \leq - e_i^\transpose \neatmat{\thickbar u_t \\[-.25em] \thickbar y_t} + f_i, \;\; i \in \integer_{[1,q]}.
\end{align}
\end{lemma}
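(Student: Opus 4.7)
The plan is to reduce the distributionally robust CVaR constraint \eqref{Eq:DR_CVaR_Constraint} to an equivalent family of scalar SOC constraints by exploiting the affine-in-$\eta_k$ structure of \eqref{Eq:IO_wrt_independent_RV} and invoking a worst-case CVaR reformulation over a mean-covariance ambiguity set.

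First, substituting \eqref{Eq:IO_wrt_independent_RV} into \eqref{Eq:constraint_function}, each argument of the max inside $h(u_t,y_t)$ becomes a scalar affine function of $\eta_k$:
\begin{align*}
    h_i(\eta_k) := \mu_i + e_i^\transpose \Lambda_t \, \eta_k, \quad \mu_i := e_i^\transpose \neatmat{\thickbar u_t \\[-.25em] \thickbar y_t} - f_i.
\end{align*}
Under every $\mathbb{D}\in\mathcal{D}$, $\eta_k$ has mean zero and covariance $\Sigma^\eta$, so $h_i(\eta_k)$ has mean $\mu_i$ and variance $\sigma_i^2 = \Vert (\Sigma^\eta)^{1/2}\Lambda_t^\transpose e_i\Vert_2^2$, both independent of $\mathbb{D}$. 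Consequently, the ambiguity set for $h(u_t,y_t)=\max_i h_i(\eta_k)$ is a fixed mean-covariance family of affine functions of the zero-mean, $\Sigma^\eta$-covariance vector $\eta_k$.

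Next I would invoke the SOC reformulation of distributionally robust CVaR over a mean-covariance ambiguity set from \cite{VanParys2015, Zymler2013}. The standard derivation starts from the Rockafellar--Uryasev identity $\mathrm{CVaR}_\alpha[z] = \inf_\beta \{\beta + \alpha^{-1}\mathbb{E}[(z-\beta)^+]\}$, combines it with the Scarf moment bound $\sup_\mathbb{D}\mathbb{E}[(z-\beta)^+] = \tfrac12(\sqrt{\sigma_z^2 + (\mu_z-\beta)^2} - (\beta-\mu_z))$, swaps the sup and inf via Sion's minimax theorem, and optimizes out $\beta$ in closed form. The result is a scalar worst-case CVaR value expressible in the mean and standard deviation of each affine piece, yielding exactly the $i$-th scalar SOC inequality in \eqref{Eq:CVaR_Constraint_SOCP_form}.

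The main obstacle is the $\max_i$ inside the CVaR: in general $\mathrm{CVaR}[\max_i z_i]$ strictly exceeds $\max_i \mathrm{CVaR}[z_i]$, so only one direction of the equivalence is automatic. To recover the iff, I would exploit the extremal structure of mean-covariance ambiguity: the worst-case distribution attaining $\sup_\mathbb{D}\mathbb{D}\dash\mathrm{CVaR}_\alpha[h_{i^\star}]$ for the binding index $i^\star$ is a two-point law supported on the ray $\pm(\Sigma^\eta)^{1/2}\Lambda_t^\transpose e_{i^\star}$. This law concentrates all tail mass on a single affine piece, so $\max_i h_i$ collapses to $h_{i^\star}$ on its support, and the joint worst-case CVaR equals $\max_i \sup_\mathbb{D}\mathbb{D}\dash\mathrm{CVaR}_\alpha[h_i(\eta_k)]$. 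This gives the decomposition of \eqref{Eq:DR_CVaR_Constraint} into the $q$ scalar SOC constraints \eqref{Eq:CVaR_Constraint_SOCP_form}. The delicate point will be verifying that this extremal construction can be lifted to an admissible joint distribution on all of $\eta_k = \col(x_k-\mu^{\sf x}_k, w_{[k,k+N)}, v_{[k,k+N)})$ consistent with the uncorrelation structure built into $\mathcal{D}$, and matching the precise constant $2\sqrt{(1-\alpha)/\alpha}$ appearing in the statement.
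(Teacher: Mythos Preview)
Your route diverges from the paper's, and the step where you handle the $\max_i$ has a genuine gap. The paper does not try to decompose $\sup_{\mathbb{D}}\mathbb{D}\dash\mathrm{CVaR}_\alpha[\max_i h_i]$ into the individual quantities $\sup_{\mathbb{D}}\mathbb{D}\dash\mathrm{CVaR}_\alpha[h_i]$ at all. Instead it invokes \cite[Thm.~3.3]{Zymler2013}, which gives an \emph{exact} semidefinite reformulation of the DR-CVaR of a pointwise maximum of affine functions over a mean--covariance ambiguity set: the $\max$ over $i$ is absorbed into a single PSD multiplier $\Theta_t$ that must dominate each of the $q$ rank-one matrices built from $(\Lambda_t^\transpose e_i,\,e_i^\transpose\col(\thickbar u_t,\thickbar y_t)-f_i)$. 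The second step then cites \cite[Thm.~1]{Ghaoui2003} to show that feasibility of this LMI system in $(\theta_t,\Theta_t)$ is equivalent to the $q$ SOC inequalities \eqref{Eq:CVaR_Constraint_SOCP_form}. Both steps are quoted as if-and-only-if, and the nontrivial content---that the coupled LMI decouples across $i$ into separate cones with the stated coefficient---lives entirely inside those two citations, not in a Scarf/minimax computation.

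Your extremal-distribution argument does not establish the direction you need, and the identity it targets is in fact false. You claim the two-point law attaining $\sup_{\mathbb{D}}\mathbb{D}\dash\mathrm{CVaR}_\alpha[h_{i^\star}]$ makes $\max_j h_j$ collapse to $h_{i^\star}$ on its support; but that law sits on two points along the ray of $\Lambda_t^\transpose e_{i^\star}$, and at one of them a different piece can dominate (in one dimension take $h_1=\eta-c$, $h_2=-\eta-c$: at the negative support point of the $h_1$-extremal law, $h_2>h_1$). More decisively, the equality $\sup_{\mathbb{D}}\mathrm{CVaR}_\alpha[\max_i h_i]=\max_i\sup_{\mathbb{D}}\mathrm{CVaR}_\alpha[h_i]$ fails over mean--covariance ambiguity: in the same example with $\alpha=\tfrac12$ one computes the left side as $\sqrt{2}-c$ and the right side as $1-c$. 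This is exactly why your Scarf computation yields the coefficient $\sqrt{(1-\alpha)/\alpha}$ rather than the $2\sqrt{(1-\alpha)/\alpha}$ in \eqref{Eq:CVaR_Constraint_SOCP_form}: the discrepancy you flagged is precisely the contribution of the $\max$, which the Zymler LMI captures and the per-piece Scarf bound cannot.
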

\Versions{}{
\begin{proof} \tgrey
\setlength{\abovedisplayskip}{.2em}
\setlength{\belowdisplayskip}{.2em}
    Substituting \eqref{Eq:IO_wrt_independent_RV} into \eqref{Eq:constraint_function}, $h(u_t, y_t)$ can be written as
\begin{align*}
    h(u_t, y_t) = \max_{i \in \{1,\ldots,q\}} e_i^\transpose \Lambda_t \eta_k + e^\transpose_i \col(\thickbar u_t, \thickbar y_t) - f_i,
\end{align*}
    where the random variable $\eta_k$ has zero mean and variance $\Sigma^\eta$.
    According to \cite[Thm. 3.3]{Zymler2013}, \eqref{Eq:DR_CVaR_Constraint} holds if and only if there exist $\theta_t \in \real$ and $\Theta_t \in \symmetric^{n_\eta+1}_+$ satisfying the LMIs
\begin{align*} \begin{aligned}
    0 &\geq \alpha \theta_t + \mathrm{Trace} \big[ \Theta_t \, \Diag(\Sigma^\eta, \, 1)  \big] \\
    \Theta_t &\succeq \compactmat{
        0_{n_\eta \times n_\eta} & \Lambda_t^\transpose e_i \\[.2em] 
        e_i^\transpose \Lambda_t &  e^\transpose_i \col(\thickbar u_t, \thickbar y_t) - f_i - \theta_t
        },\,\, i \in \integer_{[1,q]}.
\end{aligned} \end{align*}
    From \cite[Thm. 1]{Ghaoui2003}, these LMIs are feasible in $(\theta_t,\Theta_t)$ if and only if \eqref{Eq:CVaR_Constraint_SOCP_form} holds, which completes the proof.
\end{proof}}

SMPC problems typically consider the expected cost $\sum_{t=k}^{k+N-1} \mathbb{E}[J_t (u_t, y_t)]$ summing \eqref{Eq:stage_cost} over the horizon, which is equal to a deterministic quadratic function of $\thickbar u$ and $M^s_t$,
\begin{align} \label{Eq:cost_reduced}
    \textstyle{\sum_{t=k}^{k+N-1}} \big[ J_t (\thickbar u_t, \thickbar y_t)
    + \Vert \Diag(R, Q)^{\frac12} \Lambda_t (\Sigma^\eta)^{\frac12} \Vert_{\sf F}^2 \big],
\end{align}
given the mean and variance of $\col(u_t, y_t)$ and given that $\mathbb{E} [ \Vert z \Vert_S^2 ] = \Vert \mathbb{E}[z] \Vert_S^2 + \Vert S^{\frac12} \mathrm{Var}[z]^{\frac12} \Vert_{\sf F}^2$ for any random vector $z$ and fixed matrix $S$; $\Vert \cdot \Vert_{\sf F}$ denotes the Frobenius norm.

\subsubsection{SMPC Optimization Problem and Algorithm}
Using the cost \eqref{Eq:cost_reduced} and reformulation \eqref{Eq:CVaR_Constraint_SOCP_form} of constraint \eqref{Eq:DR_CVaR_Constraint}, we have the SMPC problem as a second-order cone problem (SOCP)
\begin{align} \label{Eq:SMPC_reduced} \begin{aligned}
    \minimize_{\thickbar u, M^s_t} \; \eqref{Eq:cost_reduced} 
    \;\mathrm{s.t.}\; \eqref{Eq:CVaR_Constraint_SOCP_form} \;\text{for}\; t \in \integer_{[k,k+N)}, \eqref{Eq:nominal_state}, \eqref{Eq:Lambda_definition},
\end{aligned} \end{align}
which problem has a unique optimal solution when feasible, since \eqref{Eq:cost_reduced} is jointly strongly convex\Versions{}{{\footnote{\tgrey Strong convexity in $\thickbar u$ is clear from the first term; strong convexity in $M_{t}^{s}$ can be shown by noting that a sub-matrix of $\col(\mathcal{J}_k, \ldots, \mathcal{J}_{k+N-1})$ with $\mathcal{J}_t := \Diag(R,Q)^{1/2} \Lambda_t (\Sigma^\eta)^{1/2}$ is $\bar{\mathcal{J}}_{\sf L} \mathcal{M} \bar{\mathcal{J}}_{\sf R}$, where $\bar{\mathcal{J}}_{\sf L} := I_N \otimes R^{1/2}$ and $\bar{\mathcal{J}}_{\sf R} := (I_{pN} - \Xi(A_{\sf L}) (I_N \otimes L_{\sf L})) (I_N \otimes (\Sigma^{\sf v})^{1/2})$ are non-singular.}}} in $\thickbar u$ and $M^s_t$.

The nominal inputs $\thickbar u$ and gains $M^s_t$ determined from \eqref{Eq:SMPC_reduced} complete the parameterization of control policies $\pi_{[k, k+N)}$ in \eqref{Eq:feedback_policy}, and the upcoming $N_{\rm c}$ control inputs $u_{[k, k+N_{\rm c})}$ are decided by the first $N_{\rm c}$ policies $\pi_{[k, k+N_{\rm c})}$ respectively, with parameter $N_{\rm c} \in \integer_{[1,N]}$.
The next control step will be set as $t = k+N_{\rm c}$, and the state mean $\mu^{\sf x}_{k+N_{\rm c}}$ in \eqref{Eq:control_step_state_meanvar} will be iterated as the estimate $\hat x_{k+N_{\rm c}}$ via \eqref{Eq:state_estimation}; {\tb we let the nominal state $\thickbar x_{k+N_{\rm c}}$ via \eqref{Eq:nominal_state} be a backup value $\mu^{\sf \bar x}_{k+N_{\rm c}}$ of $\mu^{\sf x}_{k+N_{\rm c}}$ that ensures feasibility of \eqref{Eq:SMPC_reduced} at the new control step \cite{OFSMPC:farina2015}}.
The entire SMPC control process is shown in Algorithm \ref{ALGO:SMPC}.
\begin{algorithm}
\caption{Distributionally Robust Optimized-Gain Stochastic MPC (DR/O-SMPC)} \label{ALGO:SMPC}
\begin{algorithmic}[1]
    \Require horizon lengths $N, N_{\rm c}$, system matrices $A,B,C$, noise variances $\Sigma^{\sf w}, \Sigma^{\sf v}$, initial-state mean $\mu^{\sf x}_\initial$, cost matrices $Q, R$, constraint coefficients $E, f$, and CVaR level $\alpha$.
    \State Compute $\Sigma^{\sf x}, L_{\sf L}$ via \eqref{Eq:Kalman_gain} and $\Delta^{\sf U}_{[0,N)}, \Delta^{\sf Y}_{[0,N)}, \Delta^{\sf A}_{[0,N)}, \Delta^{\sf M}$ through \ref{APPENDIX:Delta_Lambda_definition}.
    \State Initialize the control step $k \gets 0$ and set $\mu^{\sf x}_0 \gets \mu^{\sf x}_\initial$.
    \State Solve $\thickbar u_{[k,k+N)}$ and $M^s_t$ from problem \eqref{Eq:SMPC_reduced}. \label{LINE:SMPC:solving}
    \State {\tb {\bf If} \eqref{Eq:SMPC_reduced} is infeasible {\bf then} Set $\mu^{\sf x}_k \gets \mu^{\sf \bar x}_k$, and redo line \ref{LINE:SMPC:solving}.}
    \For{{\bf $t$ from $k$ to $k+N_{\rm c}-1$}}
        \State Input $u_t \gets \pi_t ( \nu_{[k,t)} )$ in \eqref{Eq:feedback_policy} to the system \eqref{Eq:LTI}.
        \State Measure $y_t$ from the system \eqref{Eq:LTI}.
        \State Compute $\nu_t$ via \eqref{Eq:state_estimation}.
    \EndFor
    \State Set $(\mu^{\sf x}_{k+N_{\rm c}}, \mu^{\sf \bar x}_{k+N_{\rm c}})$ as $(\hat x_{k+N_{\rm c}}, \thickbar x_{k+N_{\rm c}})$ in \eqref{Eq:state_estimation}, \eqref{Eq:nominal_state}.
    \State Set $k \gets k + N_{\rm c}$. Go back to line \ref{LINE:SMPC:solving}.
\end{algorithmic}
\end{algorithm}

\subsection{Stochastic Data-Driven Predictive Control (SDDPC)} \label{SECTION:theory:SDDPC}

We develop in this section a data-driven control method, which consists of an offline process for data collection and an online process that makes real-time control decisions.


\subsubsection{Use of Offline Data}
In data-driven control, sufficient offline data is required to capture the system's behavior. Here we explain how we collect data and use it to calculate some quantities required in our control method. We first consider noise-free data and then address the case of noisy data.

Consider a deterministic version of the system \eqref{Eq:LTI}
\begin{equation} \label{Eq:LTI_deterministic}
    x_{t+1} = A x_t + B u_t, \qquad y_t = C x_t + D u_t.
\end{equation}
By assumption, \eqref{Eq:LTI_deterministic} is minimal; let $L \in \natural$ be such that the extended observability matrix $\mathcal{O} := \col(C, CA, \ldots, CA^{L-1})$ has full column rank.
Let $u^\data_{[1,T_\data]}, y^\data_{[1,T_\data]}$ be a $T_\data$-length trajectory of input-output data collected from \eqref{Eq:LTI_deterministic}. The input sequence $u^\data$ is assumed to be \emph{persistently exciting} of order $K_\data := L + 1 + n$, i.e., its associated $K_\data$-depth block-Hankel matrix $\Hank_{K_\data} ( u^\data_{[1,T_\data]} ) \in \real^{m K_\data \times (T_\data - K_\data + 1)}$, defined as 
\begingroup
\setlength{\abovedisplayskip}{.1em}
\setlength{\belowdisplayskip}{.2em}
\begin{align*}
    \Hank_{K_\data}(u^\data_{[1,T_\data]}) := \compactmat{
        u^\data_1 & u^\data_2 & \cdots & u^\data_{T_\data-K_\data+1} \\
        u^\data_2 & u^\data_3 & \cdots & u^\data_{T_\data-K_\data+2} \\[-.5em]
        \scalebox{0.8}{\bf$\vdots$} & \scalebox{0.8}{\bf$\vdots$} & \scalebox{0.8}{\bf$\ddots$} & \scalebox{0.8}{\bf$\vdots$} \\
        u^\data_{K_\data} & u^\data_{K_\data+1} & \cdots & u^\data_{T_\data} }, 
\end{align*}
\endgroup
has full row rank. We formulate data matrices $U_1 \in \real^{mL \times h}$, $U_2 \in \real^{m \times h}$, $Y_1 \in \real^{pL \times h}$ and $Y_2 \in \real^{p \times h}$ of width $h := T_{\rm d} - L$ by partitioning associated Hankel matrices as
\begin{align} \label{Eq:data_matrices} \begin{aligned}
    \compactmat{U_1\\ U_2} := \Hank_{L+1} ( u^\data_{[1,T_\data]} ), \quad
    \compactmat{Y_1\\ Y_2} := \Hank_{L+1} ( y^\data_{[1,T_\data]} ).
\end{aligned} \end{align}
The data matrices in \eqref{Eq:data_matrices} will now be used to represent a quantity $\mathbf{\Gamma} \in \real^{p \times (mL+pL)}$ related to the system \eqref{Eq:LTI_deterministic}, 
\begin{align}
\label{Eq:Gamma_definition}
    \mathbf{\Gamma} &= \mat{\mathbf{\Gamma}_{\sf U} & \mathbf{\Gamma}_{\sf Y}} := \mat{C \mathcal{C} & C A^L} \compactmat{I_{mL} \\ \mathcal{G} & \mathcal{O}}^\dagger,
\end{align}
with $\mathcal{C} := [A^{L-1}B, \ldots, AB, B]$ the extended controllability matrix and $\mathcal{G} := \Toep(D, CB, \ldots, C A^{L-2}B)$ the impulse-response matrix; $\Toep$ denotes the block-Toeplitz matrix
\begingroup
\setlength{\abovedisplayskip}{.3em}
\setlength{\belowdisplayskip}{.5em}
\begin{align*} \Toep(M_1, \ldots, M_k) := \smallmat{M_1 \\ M_2 & M_1 \\[-.4em] \scalebox{0.6}{\bf$\vdots$} & \scalebox{0.6}{\bf$\ddots$} & \scalebox{0.6}{\bf$\ddots$} \\ M_k & \cdots & M_2 & M_1}. \end{align*}
\endgroup

\begin{lemma} [Data Representation of $\mathbf{\Gamma}$ and $D$ \cite{SDDPC}] \label{LEMMA:system_quantity_data_representation}
    If system \eqref{Eq:LTI_deterministic} is controllable and the input data $u^\data_{[1,T_\data]}$ is persistently exciting of order $L+1+n$, then, given the data matrices in \eqref{Eq:data_matrices}, the matrix $\mathbf{\Gamma}$ defined in \eqref{Eq:Gamma_definition} and matrix $D$ in system \eqref{Eq:LTI_deterministic} can be expressed as $[\mathbf{\Gamma}_{\sf U}, \mathbf{\Gamma}_{\sf Y}, D] = Y_2 \, \col(U_1, Y_1, U_2)^\dagger$.
\end{lemma}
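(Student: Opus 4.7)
The plan is to reduce the claimed identity to two properties of $K := [\mathbf{\Gamma}_{\sf U}, \mathbf{\Gamma}_{\sf Y}, D]$ and $M := \col(U_1, Y_1, U_2)$: (a) $K M = Y_2$, and (b) the rows of $K$ lie in the column space of $M$. Together, using that $M M^\dagger$ is the orthogonal projector onto the column space of $M$, property (a) gives $K M M^\dagger = Y_2 M^\dagger$, while (b) gives $K M M^\dagger = K$, so that $K = Y_2 M^\dagger$ as claimed.

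To verify (a), I would introduce the state data matrix $X_1$ whose $j$-th column is the state at the start of the $j$-th length-$(L+1)$ window in the data. Iterating \eqref{Eq:LTI_deterministic} yields $Y_1 = \mathcal{O} X_1 + \mathcal{G} U_1$ and $Y_2 = C A^L X_1 + C\mathcal{C}\, U_1 + D U_2$. Stacking the first pair of identities gives $\col(U_1, Y_1) = H \col(U_1, X_1)$, where $H := \compactmat{I_{mL} & 0 \\ \mathcal{G} & \mathcal{O}}$ has full column rank because $\mathcal{O}$ does; hence $H^\dagger H = I$. Substituting the definition $\mathbf{\Gamma} = [C\mathcal{C}, C A^L] H^\dagger$ then yields $\mathbf{\Gamma}\, \col(U_1, Y_1) = C\mathcal{C}\, U_1 + C A^L X_1$, and adding $D U_2$ recovers $Y_2$.

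To verify (b), I would appeal to Willems' fundamental lemma: controllability of $(A,B)$ together with persistent excitation of order $L+1+n$ ensures that $\col(\Hank_{L+1}(u^\data), X_1)$ has full row rank $m(L+1) + n$. Writing $M = H' \col(U_1, X_1, U_2)$ with $H' := \compactmat{I & 0 & 0 \\ \mathcal{G} & \mathcal{O} & 0 \\ 0 & 0 & I}$, this full-row-rank property implies that the column space of $M$ coincides with that of $H'$, namely the set of vectors of the form $\col(a, \mathcal{G} a + \mathcal{O} b, c)$. Meanwhile, because $H^\dagger = (H^\transpose H)^{-1} H^\transpose$ has rows in the column space of $H$, so do the rows of $\mathbf{\Gamma}$: each admits the form $\col(a, \mathcal{G} a + \mathcal{O} b)$. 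Appending any row of $D$ as the $c$-block exhibits each row of $K$ in the desired block form, placing it in the column space of $M$.

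The main obstacle will be step (b), which requires two pieces that must mesh: (i) Willems' fundamental lemma, correctly applied with the stated excitation order $L+1+n$ to characterize the column space of $M$, and (ii) a careful unpacking of the Moore--Penrose definition of $\mathbf{\Gamma}$ to confirm that its rows enjoy the block structure matching that column space. Step (a) is a routine dynamics calculation, and assembling the two pieces at the end is mechanical.
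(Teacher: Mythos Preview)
The paper does not supply a proof of this lemma; it is quoted from \cite{SDDPC} without argument, so there is nothing in the paper to compare your approach against. That said, your plan is correct and is the natural argument. Step (a) follows by unrolling \eqref{Eq:LTI_deterministic} and using $H^\dagger H = I$ (full column rank of $H$ from injectivity of $\mathcal{O}$). Step (b) is genuinely needed, since $M$ fails to have full row rank whenever $pL>n$ and one cannot simply cancel $M$; your invocation of Willems' lemma at excitation order $(L+1)+n$ is the right version to get full row rank of $\col(U_1,X_1,U_2)$, which makes the column space of $M$ coincide with that of $H'$, and your observation that the transposed rows of $H^\dagger=(H^\transpose H)^{-1}H^\transpose$ lie in $\mathrm{range}(H)$ is exactly what places the rows of $\mathbf{\Gamma}$ (and hence of $K$) there. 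The only cosmetic remark is that ``rows of $K$ lie in the column space of $M$'' should be read as ``transposed rows,'' but your computation makes this clear.
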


With Lemma \ref{LEMMA:system_quantity_data_representation}, the matrices $\mathbf{\Gamma}, D$ are represented using offline data collected from system \eqref{Eq:LTI_deterministic}, and will be used as part of the construction for our data-driven control method.

In the case where the measured data is corrupted by noise, as will usually be the case, the pseudo-inverse computation in Lemma \ref{LEMMA:system_quantity_data_representation} is numerically fragile and does not recover the desired matrices $\mathbf{\Gamma}, D$. A standard technique to robustify this computation is to replace the pseudo-inverse $W^\dagger$ of $W := \col(U_1, Y_1, U_2)$ in Lemma \ref{LEMMA:system_quantity_data_representation} with its Tikhonov regularization $(W^\transpose W + \lambda I_h)^{-1} W^\transpose$ with a regularization parameter $\lambda > 0$. 


\subsubsection{Auxiliary State-Space Model}
The SMPC approach of Section \ref{SECTION:theory:SMPC} uses as sub-components a state estimator, an affine feedback law and a DR-CVaR constraint.
We now leverage the offline data as described in Section \ref{SECTION:theory:SDDPC}-1 to directly design analogs of these components based on data, without knowledge of the system matrices. 

We begin by constructing an auxiliary state-space model which has equivalent input-output behavior to \eqref{Eq:LTI}, but is parameterized only by the recorded data sequences.
Define auxiliary signals $\mathbf{x}_t, \mathbf{w}_t \in \real^{n_\auxiliary}$ of dimension $n_\auxiliary := mL+pL+pL^2$ for system \eqref{Eq:LTI} by
\begin{align} \label{Eq:DDModel_state_definition}
    \mathbf{x}_t := \left[ \begin{array}{c} u_{[t-L,t)} \\ \hline y^\circ_{[t-L,t)} \\ \hline \rho_{[t-L,t)} \end{array} \right], \quad
    \mathbf{w}_t := \left[ \compact{\begin{array}{c} 0_{mL\times 1} \\ \hline 0_{pL\times 1} \\ \hline 0_{pL(L-1)\times 1} \\ \rho_t \end{array}} \right]
\end{align}
where $y^\circ_t := y_t - v_t \in \real^p$ is the output excluding measurement noise, and $\rho_t := \mathcal{O} w_t \in \real^{pL}$ stacks the system's response to process noise $w_t$ on time interval $[t+1, t+L]$.
The auxiliary signals $\mathbf{x}_t, \mathbf{w}_t$ together with $u_t, y_t, v_t$ then satisfy the relations given by Lemma \ref{LEMMA:AuxModel}. 
\begin{lemma}[Auxiliary Model {\cite{SDDPC}}] \label{LEMMA:AuxModel}
    For system \eqref{Eq:LTI}, signals $u_t, y_t, v_t$ and the auxiliary signals $\mathbf{x}_t, \mathbf{w}_t$ in \eqref{Eq:DDModel_state_definition} satisfy
\begin{align} \label{Eq:DDModel}
    \mathbf{x}_{t+1} = \mathbf{A} \mathbf{x}_t + \mathbf{B} u_t + \mathbf{w}_t, \quad
    y_t = \mathbf{C} \mathbf{x}_t + D u_t + v_t
\end{align}
    with $\mathbf{A} \in \real^{n_\auxiliary \times n_\auxiliary}$, $\mathbf{B} \in \real^{n_\auxiliary \times m}$, $\mathbf{C} \in \real^{p \times n_\auxiliary}$ given by
\begin{align*} \begin{aligned}
    \mathbf{A} &:= \col(0_{mL \times n_\auxiliary}, 0_{p(L-1) \times n_\auxiliary}, \mathbf{C}, 0_{pL^2 \times n_\auxiliary}) \\
    &\quad\; + \Diag(\mathcal{D}_m, \mathcal{D}_p, \mathcal{D}_{pL}), \;\text{with}\; \mathcal{D}_q := \smallmat{& I_{q(L-1)} \\[-.25em] 0_{q\times q}}, \\
    \mathbf{B} &:= \col \big( 0_{m(L-1) \times m}, I_m, 0_{p(L-1)\times m}, D, 0_{pL^2\times m}), \\
    \mathbf{C} &:= \left[ \mathbf{\Gamma}_{\sf U}, \mathbf{\Gamma}_{\sf Y}, \mathbf{F} - \mathbf{\Gamma}_{\sf Y} \mathbf{E} \right],
\end{aligned} \end{align*}
with matrices $\mathbf{\Gamma}_{\sf U}, \mathbf{\Gamma}_{\sf Y}$ in \eqref{Eq:Gamma_definition}, and zero-one matrices $\mathbf{E} := \Toep(0_{p\times pL}, S_1, \ldots, S_{L-1})$ and $\mathbf{F} := [S_L, S_{L-1}, \ldots, S_1]$ composed by $S_j := [ 0_{p\times (j-1)p} , I_p , 0_{p\times (L-j)p} ]$ for $j \in \integer_{[1,L]}$.
\end{lemma}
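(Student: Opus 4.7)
The plan is to verify the two equations in \eqref{Eq:DDModel} block by block, exploiting the fact that $\mathbf{x}_t$ stacks sliding windows of $u$, $y^\circ$, and $\rho$, so the state recursion should reduce to ``shift and append.'' Writing $\mathbf{A}$ as its given sum of a block-diagonal shift $\Diag(\mathcal{D}_m, \mathcal{D}_p, \mathcal{D}_{pL})$ plus an output-insertion term, the $u$- and $\rho$-blocks of $\mathbf{x}_{t+1}$ are immediate: each shift $\mathcal{D}_q$ drops the oldest entry and leaves a zero slot at the bottom, which is filled by $u_t$ in $\mathbf{B} u_t$ (for the $u$-block) and by $\rho_t$ in $\mathbf{w}_t$ (for the $\rho$-block). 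The $y^\circ$-block reduces to the output equation $y^\circ_t = \mathbf{C}\mathbf{x}_t + D u_t$, because the shift handles the first $L-1$ entries and the bottom slot is filled by the $\mathbf{C}$-insertion in $\mathbf{A}$ together with the $D u_t$ in $\mathbf{B} u_t$.

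The main work is thus proving the output equation. I would iterate the state equation in \eqref{Eq:LTI} to obtain
\begin{equation*}
x_t = \mathcal{C}\, u_{[t-L,t)} + A^L x_{t-L} + \textstyle\sum_{i=0}^{L-1} A^{L-1-i} w_{t-L+i},
\end{equation*}
and separately expand $y^\circ_{[t-L,t)} = \mathcal{O} x_{t-L} + \mathcal{G}\, u_{[t-L,t)} + \mathcal{G}'\, w_{[t-L,t)}$, where $\mathcal{G}' := \Toep(0, C, CA, \ldots, CA^{L-2})$ captures the strictly lower block-triangular action of past process noise on past noise-free outputs. A key pivot identity is $S_j \mathcal{O} = C A^{j-1}$, from which direct Toeplitz bookkeeping yields
\begin{equation*}
\mathbf{E}\, \rho_{[t-L,t)} = \mathcal{G}'\, w_{[t-L,t)}, \qquad \mathbf{F}\, \rho_{[t-L,t)} = \textstyle\sum_{i=0}^{L-1} C A^{L-1-i} w_{t-L+i}.
\end{equation*}

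With these in hand, I would invoke the defining relation \eqref{Eq:Gamma_definition}. Since observability makes $\mathcal{O}$ (and thus the block matrix being pseudo-inverted in \eqref{Eq:Gamma_definition}) have full column rank, its pseudo-inverse is a genuine left inverse, which gives $\mathbf{\Gamma}_{\sf U} + \mathbf{\Gamma}_{\sf Y}\mathcal{G} = C\mathcal{C}$ and $\mathbf{\Gamma}_{\sf Y}\mathcal{O} = C A^L$. Substituting the expansion of $y^\circ_{[t-L,t)}$ together with these two identities into $\mathbf{C}\mathbf{x}_t = \mathbf{\Gamma}_{\sf U} u_{[t-L,t)} + \mathbf{\Gamma}_{\sf Y}(y^\circ_{[t-L,t)} - \mathbf{E}\rho_{[t-L,t)}) + \mathbf{F}\rho_{[t-L,t)}$ collapses the expression to $C\mathcal{C}\, u_{[t-L,t)} + C A^L x_{t-L} + \sum_{i} C A^{L-1-i} w_{t-L+i} = C x_t$. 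Adding $D u_t + v_t$ delivers $y_t = \mathbf{C}\mathbf{x}_t + D u_t + v_t$, completing the output equation and hence the lemma.

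The main obstacle I expect is the Toeplitz/index bookkeeping for the two $\rho$-identities, since one must carefully align the block-row/block-column structure of $\mathbf{E}$ and $\mathbf{F}$ (which act on the stacked $\rho_{t-L+i} = \mathcal{O} w_{t-L+i}$) with that of $\mathcal{G}'$ and the controllability-style convolution $\sum_i A^{L-1-i} w_{t-L+i}$ appearing in $C x_t$. Once $S_j \mathcal{O} = C A^{j-1}$ is used as the pivot, everything else reduces to direct block-matrix multiplication (for the state recursion) or a clean substitution via the $\mathbf{\Gamma}$ identity (for the output equation).
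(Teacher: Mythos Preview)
Your proposal is correct. The paper does not supply its own proof of this lemma; it is quoted from the prior work \cite{SDDPC}, so there is no in-paper argument to compare against. Your ``shift-and-append'' verification of the state recursion and the reduction of the $y^\circ$-block to the output equation are exactly right, and the output-equation argument---expanding $y^\circ_{[t-L,t)}$ via $\mathcal{O},\mathcal{G},\mathcal{G}'$, using $S_j\mathcal{O}=CA^{j-1}$ to establish $\mathbf{E}\rho=\mathcal{G}'w$ and $\mathbf{F}\rho=\sum_i CA^{L-1-i}w_{t-L+i}$, and then invoking the full-column-rank pseudo-inverse identity $[\mathbf{\Gamma}_{\sf U},\mathbf{\Gamma}_{\sf Y}]\smallmat{I\\ \mathcal{G} & \mathcal{O}}=[C\mathcal{C},\,CA^L]$---is clean and complete. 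The only point worth flagging explicitly when you write it up is that the ``$\mathbf{C}$-insertion'' row in $\mathbf{A}$ acts on the \emph{entire} auxiliary state $\mathbf{x}_t$, not just its $y^\circ$-block; you implicitly use this, but it is easy to misread from the block-diagonal-plus-correction description of $\mathbf{A}$.
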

The output noise $v_t$ in \eqref{Eq:DDModel} is precisely the same as in \eqref{Eq:LTI}; $\mathbf{w}_t$ appears now as a new disturbance of zero mean and the variance $\mathbf{\Sigma}^{\sf w} := \Diag ( 0_{(n_\auxiliary - pL) \times (n_\auxiliary - pL)}, \Sigma^\rho)$,
where $\Sigma^\rho := \mathcal{O} \Sigma^{\sf w} \mathcal{O}^\transpose \in \symmetric^{pL}_+$ is the variance of $\rho_t$.
The matrices $\mathbf{A}, \mathbf{B}, \mathbf{C}, D$ in \eqref{Eq:DDModel} are known given offline data described in Section \ref{SECTION:theory:SDDPC}-1, since they only depend on $\mathbf{\Gamma}_{\sf U}, \mathbf{\Gamma}_{\sf Y}, D$ which are data-representable via Lemma \ref{LEMMA:system_quantity_data_representation}.
Hence, the auxiliary model \eqref{Eq:DDModel} can be interpreted as a data-representable realization of the system \eqref{Eq:LTI}. 


\subsubsection{Data-Driven State Estimation, Feedback and Constraint}
The auxiliary model \eqref{Eq:DDModel} will now be used for both state estimation and constrained feedback control purposes. Suppose we are at a control step $t=k$ in a receding-horizon process.
Similar to \eqref{Eq:control_step_state_meanvar}, auxiliary state $\mathbf{x}_k$ has condition $\mathbb{E}[\mathbf{x}_k] = \boldsymbol{\mu}^{\sf x}_k$ and $\mathrm{Var}[\mathbf{x}_k] = \mathbf{\Sigma}^{\sf x}$,
where $\boldsymbol{\mu}^{\sf x}_k$ is known from the state estimator to be introduced next; at the initial time $k=0$, the initial mean $\boldsymbol{\mu}^{\sf x}_\initial$ is a parameter;
{\tb the variance $\mathbf{\Sigma}^{\sf x}$ is the unique positive semi-definite solution to DARE \eqref{Eq:DD:Kalman_gain:state_variance},
\begin{subequations} \label{Eq:DD:Kalman_gain} \begin{align}
\label{Eq:DD:Kalman_gain:state_variance}
    & \mathbf{\Sigma}^{\sf x} = (\mathbf{A} - \mathbf{L}_{\sf L} \mathbf{C}) \mathbf{\Sigma}^{\sf x} \mathbf{A}^\transpose + \mathbf{\Sigma}^{\sf w} \\
\label{Eq:DD:Kalman_gain:gains}
    & \mathbf{L}_{\sf L} := \mathbf{A} \mathbf{\Sigma}^{\sf x} \mathbf{C}^\transpose (\mathbf{C} \mathbf{\Sigma}^{\sf x} \mathbf{C}^\transpose + \Sigma^{\sf v})^{-1}
\end{align} \end{subequations}
given $(\mathbf{A}, \mathbf{C})$ detectable and $(\mathbf{A}, \mathbf{\Sigma}^{\sf w})$ stabilizable \cite[Lemma 5]{SDDPC}.}
The state estimator for the auxiliary model \eqref{Eq:DDModel} is analogous to \eqref{Eq:state_estimation}, with observer gain $\mathbf{L}_{\sf L} \in \real^{n_\auxiliary \times p}$ in \eqref{Eq:DD:Kalman_gain:gains}, 
\begin{subequations} \label{Eq:DD:state_estimation} \begin{align}
    \label{Eq:DD:state_estimation:innovation}
    \boldsymbol{\nu}_t &:= y_t - \mathbf{C} \hat {\mathbf{x}}_t - D u_t, &&  t \in \integer_{[k,k+N)} \\
    \label{Eq:DD:state_estimation:iteration}
    \hat{\mathbf{x}}_{t+1} &:= \mathbf{A} \hat{\mathbf{x}}_t + \mathbf{B} u_t + \mathbf{L}_{\sf L} \boldsymbol{\nu}_t, &&  t \in \integer_{[k,k+N)}
    \\
    \label{Eq:DD:state_estimation:initial}
    \hat{\mathbf{x}}_k &:= \boldsymbol{\mu}^{\sf x}_k
\end{align} \end{subequations}
where $\hat{\mathbf{x}}_t$ is the estimate and $\boldsymbol{\nu}_t$ is the innovation.
The output-error-feedback policy \eqref{Eq:feedback_policy} in SMPC is now extended as
$\boldsymbol{\pi}_t(\cdot)$,
\begin{align} \label{Eq:DD:feedback_policy}
    u_t \gets \boldsymbol{\pi}_t ( \boldsymbol{\nu}_{[k,t)} ) := \thickbar u_t + \textstyle{\sum_{s=k}^{t-1}} \, M^s_t \, \boldsymbol{\nu}_s
\end{align}
where the nominal input $\thickbar u_t \in \real^m$ and gain matrices $M^s_t \in \real^{m\times p}$ are decision variables.
Let $\thickbar{\mathbf{x}}_t \in \real^{n_\auxiliary}$ and $\thickbar{\mathbf{y}}_t \in \real^p$ be the resulting nominal state and nominal output as
\begin{subequations} \label{Eq:DD:nominal_state} \begin{align}
    \label{Eq:DD:nominal_state:recursion}
    \thickbar{\mathbf{x}}_{t+1} &:= \mathbf{A} \thickbar{\mathbf{x}}_t + \mathbf{B} \thickbar u_t, & t \in \integer_{[k,k+N)}, \\
    \label{Eq:DD:nominal_state:output}
    \thickbar{\mathbf{y}}_t &:= \mathbf{C} \thickbar{\mathbf{x}}_t + D \thickbar u_t, & t \in \integer_{[k,k+N)}, \\
    \label{Eq:DD:nominal_state:initial}
    \thickbar{\mathbf{x}}_k &:= \boldsymbol{\mu}^{\sf x}_k.
\end{align} \end{subequations}
The SOC formulation of constraint \eqref{Eq:DR_CVaR_Constraint} is similar to \eqref{Eq:CVaR_Constraint_SOCP_form},
\begin{align} \label{Eq:DD:CVaR_Constraint_SOCP_form} \begin{aligned}
    2 {\textstyle \big(\frac{1-\alpha} \alpha} \big)\!^{\frac12} \big\Vert (\mathbf{\Sigma}^\eta)\!^{\frac12} \mathbf{\Lambda}_t^\transpose e_i \big\Vert_2 \leq - e_i^\transpose \neatmat{\thickbar u_t \\[-.25em] \thickbar{\mathbf{y}}_t} + f_i, \; i \in \integer_{[1,q]}
\end{aligned} \end{align}
with matrices $\mathbf{\Sigma}^\eta := \Diag(\mathbf{\Sigma}^{\sf x}, I_N \otimes \mathbf{\Sigma}^{\sf w}, I_N \otimes \Sigma^{\sf v}) \in \symmetric^{n_{\eta \dash \auxiliary}}_+$ and $\mathbf{\Lambda}_t \in \real^{(m+p) \times n_{\eta \dash \auxiliary}}$ with $n_{\eta \dash \auxiliary} := n_\auxiliary + n_\auxiliary N + p N$,
\begin{align} \label{Eq:DD:Lambda_definition}
    \mathbf{\Lambda}_t := \compactmat{\mathbf{\Delta}^{\sf U}_{t-k} \\[.2em] \mathbf{\Delta}^{\sf Y}_{t-k}} \mathcal{M} \mathbf{\Delta}^{\sf M} + \compactmat{0_{m \times n_{\eta \dash \auxiliary}} \\[.2em] \mathbf{\Delta}^{\sf A}_{t-k}}
\end{align}
where $\mathbf{\Delta}^{\sf U}_i \in \real^{m \times mN}$, $\mathbf{\Delta}^{\sf Y}_i \in \real^{p \times mN}$, $\mathbf{\Delta}^{\sf A}_i \in \real^{p \times n_{\eta \dash \auxiliary}}$ and $\mathbf{\Delta}^{\sf M} \in \real^{pN \times n_{\eta \dash \auxiliary}}$ can be found in \ref{APPENDIX:Delta_Lambda_definition}, and where $\mathcal{M} \in \real^{mN \times pN}$ is a concatenation of $M^s_t$ as in \eqref{Eq:M_concatenation}.


\subsubsection{SDDPC Optimization Problem and Algorithm}
With the results above, we are now ready to mirror the steps of getting \eqref{Eq:SMPC_reduced}
and formulate a distributionally robust optimized-gain Stochastic Data-Driven Predictive Control (SDDPC) problem,
\begin{align} \label{Eq:SDDPC_reduced} \begin{aligned}
    \minimize_{\thickbar u, M^s_t} \; \eqref{Eq:DD:cost_reduced}
    \;\mathrm{s.t.}\; \eqref{Eq:DD:CVaR_Constraint_SOCP_form} \;\text{for}\; t \in \integer_{[k,k+N)}, \eqref{Eq:DD:nominal_state}, \eqref{Eq:DD:Lambda_definition}
\end{aligned} \end{align}
where the quadratic cost function is analogous to \eqref{Eq:cost_reduced} as
\begin{align} \label{Eq:DD:cost_reduced}
    \textstyle{\sum_{t=k}^{k+N-1}} \big[ J_t (\thickbar u_t, \thickbar{\mathbf{y}}_t)
    + \Vert \Diag(R, Q)^{\frac12} \mathbf{\Lambda}_t (\mathbf{\Sigma}^\eta)^{\frac12} \Vert_{\sf F}^2 \big].
\end{align}
Problem \eqref{Eq:SDDPC_reduced} has a unique optimal solution if feasible, similar as problem \eqref{Eq:SMPC_reduced}. The solution $(\thickbar u, M^s_t)$ finishes parameterization of the control policies $\boldsymbol{\pi}_{[k, k+N)}$ via \eqref{Eq:DD:feedback_policy}, where the first $N_{\rm c}$ policies are applied to the system.
At the next control step $t = k + N_{\rm c}$, the state mean $\boldsymbol{\mu}^{\sf x}_{k+N_{\rm c}}$ is iterated as the estimate $\hat{\mathbf{x}}_{k+N_{\rm c}}$ via \eqref{Eq:state_estimation}, {\tb with a backup value $\boldsymbol{\mu}^{\sf \bar x}_{k+N_{\rm c}}$ of $\boldsymbol{\mu}^{\sf x}_{k+N_{\rm c}}$ equal to the nominal state $\thickbar{\mathbf{x}}_{k+N_{\rm c}}$ via \eqref{Eq:nominal_state}.} 
The method is formally summarized in Algorithm \ref{ALGO:SDDPC}.

\begin{algorithm}
\caption{Distributionally Robust Optimized-Gain Stochastic Data-Driven Predictive Control (DR/O-SDDPC)} \label{ALGO:SDDPC}
\begin{algorithmic}[1]
    \Require horizon lengths $L, N, N_{\rm c}$, offline data $u^\data, y^\data$, noise variances $\Sigma^\rho, \Sigma^{\sf v}$, initial-state mean $\boldsymbol{\mu}^{\sf x}_\initial$, cost matrices $Q, R$, constraint coefficients $E, f$, and CVaR level $\alpha$.
    \State Compute $\mathbf{\Gamma}$ and $D$ as in Section \ref{SECTION:theory:SDDPC}-1 using data $u^\data, y^\data$, and formulate matrices $\mathbf{A}, \mathbf{B}, \mathbf{C}$ as in Section \ref{SECTION:theory:SDDPC}-2.
    \State Compute $\mathbf{\Sigma}^{\sf x}, \mathbf{L}_{\sf L}$ via \eqref{Eq:DD:Kalman_gain} and $\mathbf{\Delta}^{\sf U}_{[0,N)}, \mathbf{\Delta}^{\sf Y}_{[0,N)},  \mathbf{\Delta}^{\sf A}_{[0,N)}$, $\mathbf{\Delta}^{\sf M}$ through \ref{APPENDIX:Delta_Lambda_definition}. 
    \State Initialize the control step $k \gets 0$ and set $\boldsymbol{\mu}^{\sf x}_0 \gets \boldsymbol{\mu}^{\sf x}_\initial$.
    \State Solve $\thickbar u_{[k,k+N)}$ and $M^s_t$ from problem \eqref{Eq:SDDPC_reduced}. \label{LINE:SDDPC:solving}
    \State {\tb {\bf If} \eqref{Eq:SDDPC_reduced} is infeasible {\bf then} Set $\boldsymbol{\mu}^{\sf x}_k \gets \boldsymbol{\mu}^{\sf \bar x}_k$, and redo line \ref{LINE:SDDPC:solving}.}
    \For{{\bf $t$ from $k$ to $k+N_{\rm c}-1$}}
        \State Input $u_t \gets \boldsymbol{\pi}_t ( \boldsymbol{\nu}_{[k,t)} )$ in \eqref{Eq:DD:feedback_policy} to the system \eqref{Eq:LTI}.
        \State Measure $y_t$ from the system \eqref{Eq:LTI}.
        \State Compute $\boldsymbol{\nu}_t$ via \eqref{Eq:DD:state_estimation}.
    \EndFor
    \State Set $(\boldsymbol{\mu}^{\sf x}_{k+N_{\rm c}}, \boldsymbol{\mu}^{\sf \bar x}_{k+N_{\rm c}})$ as $(\hat{\mathbf{x}}_{k+N_{\rm c}}, \thickbar{\mathbf{x}}_{k+N_{\rm c}})$ in \eqref{Eq:DD:state_estimation}, \eqref{Eq:DD:nominal_state}
    \State Set $k \gets k + N_{\rm c}$. Go back to line \ref{LINE:SDDPC:solving}
\end{algorithmic}
\end{algorithm}

\subsection{Theoretical Equivalence of SMPC and SDDPC} \label{SECTION:theory:equivalence}


We establish theoretical results in this section, starting by an underlying relation between the means of $x_k$ and $\mathbf{x}_k$. 

\begin{lemma}[Related Means of $x_k$ and $\mathbf{x}_k$ \cite{SDDPC}] 
    If $\mu^{\sf x}_k$ is the mean of $x_k$ and $\boldsymbol{\mu}^{\sf x}_k$ is the mean of $\mathbf{x}_k$, then they satisfy
\begin{align} \label{Eq:LEMMA:state_distribution_relation}
    \mu^{\sf x}_k = \Phi_\original \,\widetilde{\mu}^{\sf \, x}_k, \qquad
    \boldsymbol{\mu}^{\sf x}_k = \Phi_\auxiliary \,\widetilde{\mu}^{\sf \, x}_k
\end{align}
    for some $\widetilde{\mu}^{\sf \, x}_k \in \real^{mL+n(L+1)}$, where matrices $\Phi_\original, \Phi_\auxiliary$ are
\begin{align*}
    \Phi_\original := \big[ \mathcal{C} ,\, A^L ,\, \mathcal{C}_{\sf w} \big], \quad
    \Phi_\auxiliary := \smallmat{ I_{mL} && \\ \mathcal{G} & \;\;\mathcal{O}\; & \mathcal{G}_{\sf w} \\ && I_L \otimes \mathcal{O} },
\end{align*}
    with the matrices $\mathcal{C}$, $\mathcal{O}$, $\mathcal{G}$ defined in Section \ref{SECTION:theory:SDDPC}-1 and $\mathcal{C}_{\sf w} := [A^{L-1},\ldots,A,I_n]$, $\mathcal{G}_{\sf w} := \Toep(0_{p\times n}, C, CA, \ldots, CA^{L-2})$.
\end{lemma}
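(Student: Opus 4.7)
The plan is to exhibit an explicit choice of $\widetilde{\mu}^{\sf\, x}_k$ that simultaneously generates $\mu^{\sf x}_k$ via $\Phi_\original$ and $\boldsymbol{\mu}^{\sf x}_k$ via $\Phi_\auxiliary$. The natural candidate, given that the dimension $mL + n(L+1) = mL + n + nL$ decomposes as input-block + state-block + noise-block, is
\begin{align*}
    \widetilde{\mu}^{\sf\, x}_k \;:=\; \col\!\big(\mathbb{E}[u_{[k-L,k)}],\; \mathbb{E}[x_{k-L}],\; \mathbb{E}[w_{[k-L,k)}]\big).
\end{align*}
I would then verify the two identities independently by unrolling the LTI recursion \eqref{Eq:LTI} backwards by $L$ steps.

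For the original state, straightforward iteration of $x_{t+1} = Ax_t + Bu_t + w_t$ from $t = k-L$ to $t = k-1$ yields
\begin{align*}
    x_k = A^L x_{k-L} + \mathcal{C}\, u_{[k-L,k)} + \mathcal{C}_{\sf w}\, w_{[k-L,k)},
\end{align*}
where $\mathcal{C}$ and $\mathcal{C}_{\sf w}$ are exactly as defined in the lemma. Taking expectations gives $\mu^{\sf x}_k = \mathcal{C}\,\mathbb{E}[u_{[k-L,k)}] + A^L \mathbb{E}[x_{k-L}] + \mathcal{C}_{\sf w}\,\mathbb{E}[w_{[k-L,k)}] = \Phi_\original\, \widetilde{\mu}^{\sf\, x}_k$, establishing the first identity.

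For the auxiliary state, I would treat each of the three stacked blocks in the definition \eqref{Eq:DDModel_state_definition} separately. The top block $u_{[k-L,k)}$ trivially contributes $I_{mL}$ times the input component of $\widetilde{\mu}^{\sf\, x}_k$. For the middle block $y^\circ_{[k-L,k)}$, substituting $y^\circ_t = Cx_t + Du_t$ and applying the same $L$-step unrolling yields
\begin{align*}
    y^\circ_{[k-L,k)} = \mathcal{G}\, u_{[k-L,k)} + \mathcal{O}\, x_{k-L} + \mathcal{G}_{\sf w}\, w_{[k-L,k)},
\end{align*}
where the Toeplitz structure of the coefficient matrices matches the definitions of $\mathcal{G}$ and $\mathcal{G}_{\sf w}$. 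For the bottom block, the identity $\rho_t = \mathcal{O} w_t$ immediately gives $\rho_{[k-L,k)} = (I_L \otimes \mathcal{O})\, w_{[k-L,k)}$. Assembling these three block relations and taking expectations produces $\boldsymbol{\mu}^{\sf x}_k = \Phi_\auxiliary\, \widetilde{\mu}^{\sf\, x}_k$ with precisely the block-triangular structure shown.

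The proof is therefore essentially bookkeeping: unrolling the dynamics and verifying that the resulting block matrices match $\Phi_\original$ and $\Phi_\auxiliary$ column-by-column. I do not expect any genuine difficulty, but the main care-point is aligning time indices consistently in the Toeplitz expansions for $y^\circ_{[k-L,k)}$ and ensuring that the impulse-response coefficients appear in the correct order to recover $\mathcal{G}$ (which starts with $D$ on the diagonal) and $\mathcal{G}_{\sf w}$ (which starts with a zero block because process noise at time $t$ does not instantaneously affect $y_t$).
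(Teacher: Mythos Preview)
The paper does not prove this lemma; it is quoted verbatim from the reference \cite{SDDPC}, so there is no in-paper argument to compare against. Your construction is correct and is the natural one: setting $\widetilde{\mu}^{\sf\,x}_k := \col\big(\mathbb{E}[u_{[k-L,k)}],\,\mathbb{E}[x_{k-L}],\,\mathbb{E}[w_{[k-L,k)}]\big)$ and unrolling \eqref{Eq:LTI} over $L$ steps recovers $\Phi_\original$ and $\Phi_\auxiliary$ exactly as stated, with the Toeplitz bookkeeping for $\mathcal{G}$ and $\mathcal{G}_{\sf w}$ lining up as you describe. The only cosmetic remark is that under \eqref{Eq:noise_meanvar} the third block $\mathbb{E}[w_{[k-L,k)}]$ vanishes, so the last $nL$ entries of your $\widetilde{\mu}^{\sf\,x}_k$ are zero; this does not affect the argument and in fact makes the structural role of $\mathcal{C}_{\sf w}$, $\mathcal{G}_{\sf w}$, and $I_L\otimes\mathcal{O}$ transparent.
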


As we assume \eqref{Eq:LEMMA:state_distribution_relation} holds, the SMPC and SDDPC problems will have equal feasible and optimal sets.

\begin{proposition}[Equivalence of Optimization Problems\Versions{ \cite{EXTENDED}}{}] \label{PROPOSITION:equivalence_of_optimization_problems}
    If the parameters $\mu^{\sf x}_k, \boldsymbol{\mu}^{\sf x}_k$ satisfy \eqref{Eq:LEMMA:state_distribution_relation},
    then the optimal (resp. feasible) solution set of SDDPC problem \eqref{Eq:SDDPC_reduced} is equal to the optimal (resp. feasible) solution set of SMPC problem \eqref{Eq:SMPC_reduced}.
\end{proposition}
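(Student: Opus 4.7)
The plan is to reduce the proposition to two key identities that, once verified, show that problems \eqref{Eq:SMPC_reduced} and \eqref{Eq:SDDPC_reduced} are pointwise identical in their common decision variables $(\thickbar u, M_t^s)$; equality of the feasible and optimal solution sets then follows immediately. The two identities are: (i) matching nominal outputs, $\thickbar y_t = \thickbar{\mathbf{y}}_t$ for all $t \in \integer_{[k,k+N)}$; and (ii) matching stochastic second moments, $\Lambda_t \Sigma^\eta \Lambda_t^\transpose = \mathbf{\Lambda}_t \mathbf{\Sigma}^\eta \mathbf{\Lambda}_t^\transpose$ for all $t$ and all $\mathcal{M}$.

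I would establish (i) by direct simulation of the deterministic recursions \eqref{Eq:nominal_state} and \eqref{Eq:DD:nominal_state}, which are driven by the same $\thickbar u$. By Lemma \ref{LEMMA:AuxModel}, the auxiliary triple $(\mathbf{A},\mathbf{B},\mathbf{C},D)$ is an I/O-equivalent realization of $(A,B,C,D)$. The hypothesis \eqref{Eq:LEMMA:state_distribution_relation} precisely encodes that $\mu^{\sf x}_k$ and $\boldsymbol{\mu}^{\sf x}_k$ are states of these two realizations corresponding to the same zero-input output trajectory (both parameterized by the same intermediate vector $\widetilde{\mu}^{\sf \, x}_k$ via $\Phi_\original, \Phi_\auxiliary$). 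Matching outputs $\thickbar y_t = \thickbar{\mathbf{y}}_t$ then follow by an easy induction on $t$, using the shared input sequence and the realization equivalence.

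For (ii), the conceptual claim is that both sides represent the same quantity, namely $\mathrm{Var}[\col(u_t, y_t)]$ under the closed loop induced by the output-error-feedback policy \eqref{Eq:feedback_policy}. Because the innovations $\nu_t$ and $\boldsymbol{\nu}_t$ from the two Kalman filters \eqref{Eq:state_estimation} and \eqref{Eq:DD:state_estimation} both coincide with the optimal one-step predictor error of $y_t$ from past I/O data — a purely input-output object — their joint statistics with $\col(u_t, y_t)$ should be realization-invariant, and hence so is the resulting closed-loop variance. To make this rigorous, I would pursue one of two routes: (a) compute both sides explicitly using the structural expressions of the $\Delta$-matrices in \ref{APPENDIX:Delta_Lambda_definition}, together with the DAREs \eqref{Eq:Kalman_gain}, \eqref{Eq:DD:Kalman_gain} and the relations between $\Phi_\original, \Phi_\auxiliary$ implied by Lemma \ref{LEMMA:AuxModel}; or (b) invoke the variance-matching argument already carried out in \cite{SDDPC} for the fixed-gain setting and extend it, exploiting that $\Lambda_t$ and $\mathbf{\Lambda}_t$ both depend linearly on the \emph{same} matrix variable $\mathcal{M}$ via \eqref{Eq:Lambda_definition} and \eqref{Eq:DD:Lambda_definition}.

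With (i) and (ii) in hand, the final step is routine term-by-term comparison. The constraint \eqref{Eq:CVaR_Constraint_SOCP_form} depends on the decision variables only through $\col(\thickbar u_t, \thickbar y_t)$ and through $e_i^\transpose \Lambda_t \Sigma^\eta \Lambda_t^\transpose e_i = \Vert (\Sigma^\eta)^{1/2} \Lambda_t^\transpose e_i \Vert_2^2$, with \eqref{Eq:DD:CVaR_Constraint_SOCP_form} depending on the bold analogs; by (i) and (ii) the two constraints are identical. Similarly the Frobenius-norm terms in \eqref{Eq:cost_reduced} and \eqref{Eq:DD:cost_reduced} reduce to $\mathrm{trace}(\Diag(R,Q) \Lambda_t \Sigma^\eta \Lambda_t^\transpose)$ and its bold counterpart, so the two costs coincide. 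The main obstacle is clearly (ii): its system-theoretic content is intuitive, but a clean rigorous verification either requires delicate bookkeeping with the $\Delta$-matrices or a careful extension of the fixed-gain argument of \cite{SDDPC} to the decision-variable gain setting used here.
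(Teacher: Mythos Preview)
Your proposal is correct and mirrors the paper's own proof: the paper likewise reduces the proposition to the two identities $\thickbar y_t = \thickbar{\mathbf{y}}_t$ and $\Lambda_t \Sigma^\eta \Lambda_t^\transpose = \mathbf{\Lambda}_t \mathbf{\Sigma}^\eta \mathbf{\Lambda}_t^\transpose$, dispatches (i) by citing the realization-equivalence argument from \cite{SDDPC}, and handles (ii) via your route (a) --- expanding $\Lambda_t, \mathbf{\Lambda}_t$ through the $\Delta$-matrices of \ref{APPENDIX:Delta_Lambda_definition} and reducing to Markov-parameter and covariance identities ($CA^qB = \mathbf{C}\mathbf{A}^q\mathbf{B}$, $CA_{\sf L}^q L_{\sf L} = \mathbf{C}\mathbf{A}_{\sf L}^q\mathbf{L}_{\sf L}$, etc.) that follow from structural relations established in \cite{SDDPC}. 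Your term-by-term conclusion for the cost and constraints is exactly the paper's final step.
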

\Versions{}{
\begin{proof} \tgrey
    We first claim that, for all $\thickbar u$ and $M^s_t$, we have
\begin{align} \label{Eq:PROOF:equivalence:target}
    \thickbar y_t = \thickbar{\mathbf{y}}_t, \qquad
    \Lambda_t \Sigma^\eta \Lambda^\transpose_t = \mathbf{\Lambda}_t \mathbf{\Sigma}^\eta \mathbf{\Lambda}^\transpose_t
\end{align}
    for $t \in \integer_{[k,k+N)}$, which is explained in \ref{APPENDIX:proof}. 
    Given \eqref{Eq:PROOF:equivalence:target}, the objective function \eqref{Eq:cost_reduced} of problem \eqref{Eq:SMPC_reduced} and objective function \eqref{Eq:DD:cost_reduced} of problem \eqref{Eq:SDDPC_reduced} are equal, and the constraint \eqref{Eq:CVaR_Constraint_SOCP_form} in problem \eqref{Eq:SMPC_reduced} and constraint \eqref{Eq:DD:CVaR_Constraint_SOCP_form} in problem \eqref{Eq:SDDPC_reduced} are equivalent. 
    Thus the problems \eqref{Eq:SMPC_reduced} and \eqref{Eq:SDDPC_reduced} have the same objective function and constraints, and the result follows.
\end{proof} }

We present in Theorem \ref{PROPOSITION:equivalence_of_control_algorithms} our main theoretical result, saying that our proposed SDDPC control method and the benchmark SMPC method will result in identical control actions, under idealized conditions in Assumption \ref{ASSUMPTION:parameter_choice_of_DDSMPC_algorithm}.

\begin{assumption}[SDDPC Parameter Choice w.r.t. SMPC] \label{ASSUMPTION:parameter_choice_of_DDSMPC_algorithm}
    Given the parameters in Algorithm \ref{ALGO:SMPC}, we assume the parameters in Algorithm \ref{ALGO:SDDPC} satisfy the following.
\begin{enumerate}[(a)]
    \item $L$ is sufficiently large so that $\mathcal{O}$ has full column rank.
    \item Data $u^\data, y^\data$ comes from the deterministic system \eqref{Eq:LTI_deterministic}; the input data $u^\data$ is persistently exciting of order $L+1+n$.
    \item Given $\Sigma^{\sf w}$ in Algorithm \ref{ALGO:SMPC}, the parameter $\Sigma^\rho$ in Algorithm \ref{ALGO:SDDPC} is set equal to $\mathcal{O} \Sigma^{\sf w} \mathcal{O}^\transpose$.
    \item Given $\mu^{\sf x}_\initial$ in Algorithm \ref{ALGO:SMPC}, the parameter $\boldsymbol{\mu}^{\sf x}_\initial$ in Algorithm \ref{ALGO:SDDPC} is selected as $\Phi_\auxiliary \widetilde{\mu}^{\sf \,x}_\initial$ for some $\widetilde{\mu}^{\sf \,x}_\initial \in \real^{mL+n+nL}$ satisfying $\mu^{\sf x}_\initial = \Phi_\original \widetilde{\mu}^{\sf \,x}_\initial$. (Such $\widetilde{\mu}^{\sf \,x}_\initial$ always exists because $\Phi_\original$ has full row rank.)
\end{enumerate}
\end{assumption}

\begin{theorem}[\bf Equivalence of SMPC and SDDPC] \label{PROPOSITION:equivalence_of_control_algorithms}
    Consider system \eqref{Eq:LTI} with initial state $x_0$ and a specific noise realization $\{w_t, v_t\}_{t=0}^\infty$, and consider the following two processes: 
\begin{enumerate}[a)]
    \item decide control actions $\{u_t\}_{t=0}^\infty$ by executing Algorithm \ref{ALGO:SMPC};
    \item decide control actions $\{u_t\}_{t=0}^\infty$ by executing Algorithm \ref{ALGO:SDDPC}, where the parameters satisfy Assumption \ref{ASSUMPTION:parameter_choice_of_DDSMPC_algorithm}.
\end{enumerate}
    Then, the state-input-output trajectories $\{x_t, u_t, y_t\}_{t=0}^\infty$ resulting from process a) and from process b) are the same.
\end{theorem}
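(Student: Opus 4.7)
I would proceed by induction on the control-step index $k$, maintaining the hypothesis that (i) the pair $(\mu^{\sf x}_k,\boldsymbol{\mu}^{\sf x}_k)$ and the backup pair $(\mu^{\sf \bar x}_k,\boldsymbol{\mu}^{\sf \bar x}_k)$ each satisfy the relation \eqref{Eq:LEMMA:state_distribution_relation} through common preimages, and (ii) the trajectories $\{x_t,u_t,y_t\}$ of the two processes coincide on $\integer_{[0,k)}$. The base case $k=0$ follows from Assumption \ref{ASSUMPTION:parameter_choice_of_DDSMPC_algorithm}(d); parts (a)--(b) of the same assumption ensure, via Lemma \ref{LEMMA:system_quantity_data_representation}, that the data-built matrices $\mathbf{\Gamma},D$ and hence $(\mathbf{A},\mathbf{B},\mathbf{C})$ of the auxiliary model \eqref{Eq:DDModel} are recovered exactly, while part (c) matches the lifted noise variance.

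\textbf{Inductive step.} Given (i) at step $k$, Proposition \ref{PROPOSITION:equivalence_of_optimization_problems} implies that \eqref{Eq:SMPC_reduced} and \eqref{Eq:SDDPC_reduced} share feasibility and, by strong convexity, the unique minimizer $(\thickbar u^\star_{[k,k+N)},\mathcal{M}^\star)$, so both algorithms draw the same nominal input and gains (the fallback on line 4 is also common since the backup means are related by (i)). Within the window $t\in\integer_{[k,k+N_{\rm c})}$, I would then prove by a secondary induction on $t$ that $\hat x_t=\Phi_\original\widetilde{x}_t$ and $\hat{\mathbf{x}}_t=\Phi_\auxiliary\widetilde{x}_t$ for a common $\widetilde{x}_t$. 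Initialization at $t=k$ is hypothesis (i); the output identity $\mathbf{C}\Phi_\auxiliary=C\Phi_\original$ --- a direct block computation using the definition of $\mathbf{\Gamma}$ in \eqref{Eq:Gamma_definition} together with the forms of $\mathbf{C},\mathbf{E},\mathbf{F}$ in Lemma \ref{LEMMA:AuxModel} --- then gives $\boldsymbol{\nu}_t=\nu_t$; substitution into \eqref{Eq:feedback_policy} and \eqref{Eq:DD:feedback_policy} produces a common $u_t$; and the plant \eqref{Eq:LTI} driven by the common noise realization returns common $x_{t+1},y_{t+1}$. Propagating the relation through \eqref{Eq:state_estimation:iteration} and \eqref{Eq:DD:state_estimation:iteration} reduces to verifying the intertwining identities $A\Phi_\original=\Phi_\original\Psi$, $\mathbf{A}\Phi_\auxiliary=\Phi_\auxiliary\Psi$, $B=\Phi_\original b_0$, $\mathbf{B}=\Phi_\auxiliary b_0$, and $L_{\sf L}=\Phi_\original\ell_0$, $\mathbf{L}_{\sf L}=\Phi_\auxiliary\ell_0$ for a common triple $(\Psi,b_0,\ell_0)$.

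\textbf{Main obstacle.} The first four intertwining identities are routine from the block structure in Lemma \ref{LEMMA:AuxModel} (the column space of $\Phi_\original$ is $A$-invariant since its third block $\mathcal{C}_{\sf w}$ already contains $I_n$, spanning $\real^n$). The crux is the Kalman-gain identity: I would handle it by producing a common lifted variance $\widetilde\Sigma$ such that $\Sigma^{\sf x}=\Phi_\original\widetilde\Sigma\Phi_\original^\transpose$ and $\mathbf{\Sigma}^{\sf x}=\Phi_\auxiliary\widetilde\Sigma\Phi_\auxiliary^\transpose$ simultaneously satisfy DAREs \eqref{Eq:Kalman_gain:state_variance} and \eqref{Eq:DD:Kalman_gain:state_variance}, where Assumption \ref{ASSUMPTION:parameter_choice_of_DDSMPC_algorithm}(c) is exactly what aligns the noise terms of the two lifted Riccati equations. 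Uniqueness of the stabilizing DARE solution, guaranteed by the stabilizability/detectability hypotheses, then forces $\Sigma^{\sf x},\mathbf{\Sigma}^{\sf x}$ into these forms, and substitution into \eqref{Eq:Kalman_gain:gains} and \eqref{Eq:DD:Kalman_gain:gains} yields the desired common $\ell_0$.

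\textbf{Closing the outer induction.} The two secondary inductions together give trajectory agreement on $\integer_{[k,k+N_{\rm c})}$; the line-10 updates then move $\mu^{\sf x}_{k+N_{\rm c}},\boldsymbol{\mu}^{\sf x}_{k+N_{\rm c}}$ to the filter estimates and the backups to the nominal states $\thickbar x_{k+N_{\rm c}},\thickbar{\mathbf{x}}_{k+N_{\rm c}}$, which remain related through \eqref{Eq:LEMMA:state_distribution_relation} by the intertwining identities just established (applied now to the prediction step \eqref{Eq:nominal_state} and its auxiliary counterpart). Hence (i) transfers to step $k+N_{\rm c}$ and (ii) extends through $\integer_{[0,k+N_{\rm c})}$, completing the outer induction and the theorem.
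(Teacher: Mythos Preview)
Your proposal is correct and follows essentially the same route the paper takes: the paper's own proof is a one-line appeal to \cite[Thm.~9]{SDDPC} together with Proposition~\ref{PROPOSITION:equivalence_of_optimization_problems} and uniqueness of minimizers, and the intertwining identities you isolate (including the lifted Riccati/Kalman-gain relation under Assumption~\ref{ASSUMPTION:parameter_choice_of_DDSMPC_algorithm}(c)) are exactly the claims from \cite{SDDPC} that the paper invokes in \ref{APPENDIX:proof}. You have simply made explicit the double induction and the estimator-propagation step that the paper leaves to the cited reference.
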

\begin{proof}
    The proof is similar to the proof of \cite[Thm. 9]{SDDPC}, requiring Proposition \ref{PROPOSITION:equivalence_of_optimization_problems} and the fact that both problems \eqref{Eq:SMPC_reduced} and \eqref{Eq:SDDPC_reduced} have unique optimal solutions if feasible.
\end{proof}

While in practice Assumption \ref{ASSUMPTION:parameter_choice_of_DDSMPC_algorithm} may not hold, noisy offline data can be accommodated as discussed in Section \ref{SECTION:theory:SDDPC}-1, and $\Sigma^{\rho}$ becomes a tuning parameter of our SDDPC method.
\section{Numerical Case Study} \label{SECTION:simulations}

In this section, we numerically test our proposed method on a batch reactor system \Versions{applied in e.g. \cite{PCE:pan2022b}}{introduced in \cite{SimuModel:Walsh2001} and applied in \cite{DePersis2019, PCE:pan2022b}}.
The system has $n=4$ states, $m=2$ inputs and $p=2$ outputs, and the discrete-time system matrices with sampling period $0.1$s are
\begingroup
\setlength{\abovedisplayskip}{.2em}
\setlength{\belowdisplayskip}{.2em}
\begin{align*}
    \left[ \begin{array}{c|c} A & B \\ \hline C & \end{array} \right] = \left[ \compact{ \begin{array}{cccc|cc} 
    1.178 & .001 & .511 & \wideminus.403 & .004 & \wideminus.087 \\
    \wideminus.051 & .661 & \wideminus.011 & .061 & .467 & .001 \\
    .076 & .335 & .560 & .382 & .213 & \wideminus.235 \\
    0 & .335 & .089 & .849 & .213 & \wideminus.016 \\ \hline
    1 & 0 & 1 & \wideminus1 & & \\ 0 & 1 & 0 & 0 & & \end{array} } \right].
\end{align*} 
\endgroup
The process/sensor noise on each state/output follows the $t$-distribution of 2 DOFs scaled by $10^{-4}$, which is a heavy-tailed distribution. Control parameters are reported in TABLE \ref{TABLE:controller_parameters}.
We collected offline data of length $T_\data = 600$ from the noisy system, where the input data was the outcome of a PI controller $U(s) = \smallmat{0 & -1/s \\ 2+1/s & 0} Y(s)$ plus a white-noise signal of noise power $10^{-2}$.
In the online control process, the reference signal is $r_t = [0,0]^\transpose$ from time $0$s to time $30$s, alternates between $[0,0]^\transpose$ and $[0.3,0]^\transpose$ from $30$s to $60$s, and is $r_t = [0.5,0]^\transpose$ from $60$s to $90$s.
With our proposed SDDPC method, the first output signal is in Fig. \ref{FIG:output}; the signal remains around $0.4$ from $60$s to $90$s because of the safety constraint specified in TABLE \ref{TABLE:controller_parameters}.

For comparison purposes, we implemented the simulation with different controllers.
In addition to distributionally robust optimized-gain (DR/O) SMPC and SDDPC in this paper, we applied the SMPC and SDDPC frameworks from \cite{SDDPC}, which use chance constraints and a fixed feedback gain (CC/F). To observe separate impacts of using the DR constraint and optimized gains, we also implement SMPC and SDDPC with DR constraints and a fixed feedback gain (DR/F).
We also compare to DeePC, SPC and deterministic MPC as benchmarks. The model used in MPC methods is identified from the same offline data in the data-driven controllers.

The simulation results are summarized in TABLE \ref{TABLE:simulation_results}. 
We evaluate (i) the controllers' tracking performance through the tracking cost from $0$s to $60$s and (ii) the controllers' ability to satisfy constraints according to the cumulative amount of constraint violation between $60$s and $90$s, when the first output signal hits the constraint margin.
When the reference signal is constant ($0$s--$30$s), SMPC and SDDPC tracked better than other methods, aligning with the observation in \cite{SDDPC}.
Comparing DR/F and CC/F methods, the controllers with DR constraints achieved lower amounts of constraint violation ($60$s--$90$s), while the tracking performance is slightly worse during $30$s--$60$s when the reference signal has frequent step changes.
Comparing DR/O and DR/F methods, we observe that the methods with optimized gain achieved lower tracking costs when the reference signal changes frequently ($30$s--$60$s).


\begin{figure}[t]
\includegraphics[trim={1cm 0cm 1cm .5cm},clip,width=.48\textwidth]{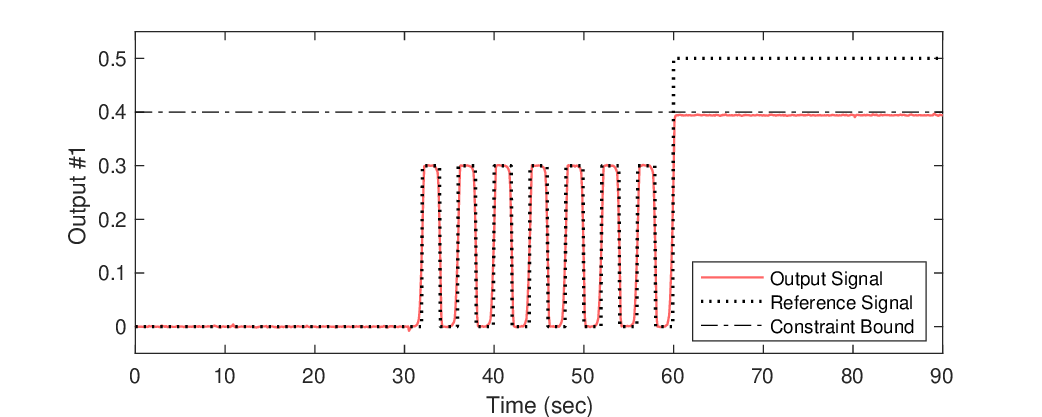}\vspace{-.2em}
\caption{The system's first output signal with DR/O-SDDPC.}
\label{FIG:output}
\end{figure}

{\afterpage{
\begin{table}[t] 
\caption{Control Parameters}
\label{TABLE:controller_parameters} 
\centering 
\begin{tabular}{@{}ll@{}} 
    \toprule
    Time horizon lengths & $L = 5$, $N = 15$, $N_{\rm c} = 5$ \\
    Cost matrices & $Q = 10^3 I_p$, $R = I_m$ \\
    Safety constraint coefficients & $E = I_{m+p} \otimes \smallmat{1\\-1}$ \\
    & $f = [.1\; .1\; .5\; .1\; .4\; .4\; .4\; .4]^\transpose$ \\
    CVaR level$^{\rm a}$ & $\alpha = 0.3$ \\
    Variance of $v_t$ for SMPC/SDDPC & $\Sigma^{\sf v} = 5 \times 10^{-7} I_p$ \\
    Variance of $\rho_t$ for SDDPC & $\Sigma^\rho = 10^{-7} I_{pL}$ \\
    Variance of $w_t$ for SMPC$^{\rm b}$ & $\Sigma^{\sf w} = \mathcal{O}^\dagger \Sigma^\rho \mathcal{O}^{\dagger \transpose}$ \\ \bottomrule 
    \multicolumn{2}{@{$^{\rm a}$}l@{}}{$\alpha$ is used as the risk bound for chance constrained controllers.} \\
    \multicolumn{2}{@{$^{\rm b}$}l@{}}{$\mathcal{O}$ is obtained given the identified model $(A,B,C,D)$ in SMPC.}
\end{tabular} \end{table}

\begin{table}[t]
\caption{Simulation Result Statistics} \label{TABLE:simulation_results}
\centering 
\begin{tabular}{@{}lccc@{}} 
    \toprule
    \multirow{2}{*}{\textbf{Controller}} & \multicolumn{2}{c}{\textbf{Total Tracking Cost}} & \textbf{Cumulative Violation} \\ 
    & $0$s to $30$s & $30$s to $60$s & from $60$s to $90$s \\ \midrule
    \textbf{DR/O-SDDPC}$^{\rm a}$ & $0.02$ & $64.2$ & $0$ \\
    DR/F-SDDPC & $0.02$ & $68.9$ & $0$ \\
    CC/F-SDDPC & $0.02$ & $64.9$ & $0.03$ \\ \midrule
    DR/O-SMPC & $0.02$ & $64.2$ & $0$ \\
    DR/F-SMPC & $0.02$ & $68.0$ & $0$ \\
    CC/F-SMPC & $0.02$ & $64.9$ & $0.01$ \\ \midrule
    deterministic MPC & $0.09$ & $64.6$ & $0.20$ \\
    SPC & $0.18$ & $65.5$ & $2.23$ \\
    DeePC & $0.18$ & $64.7$ & $0.19$ \\
    \bottomrule 
    \multicolumn{4}{@{$^{\rm a}$}l@{}}{DR -- distributionally robust constrained, CC -- chance constrained,} \\
    \multicolumn{4}{@{\;\;}l@{}}{O -- with optimized feedback gain, F -- with fixed feedback gain.}
\end{tabular} \end{table}
}}

\section{Conclusions} \label{SECTION:conclusions}

We proposed a Stochastic Data-Driven Predictive Control (SDDPC) method that accommodates distributionally robust (DR) probability constraints and produces closed-loop control policies with feedback gains determined from optimization. In theory, our SDDPC method can produce equivalent control inputs with associated Stochastic MPC, under specific conditions. Simulation results indicated separate benefits of using DR constraints and optimized feedback gains.

\setlength{\abovedisplayskip}{.2em}
\setlength{\belowdisplayskip}{.2em}
\setcounter{section}{0}
\renewcommand{\thesection}{Appendix \Alph{section}}
\section{Definition of $\Delta^{\sf U}_i$, $\Delta^{\sf Y}_i$, $\Delta^{\sf A}_i$, $\Delta^{\sf M}$} \label{APPENDIX:Delta_Lambda_definition}

The matrices $\Delta^{\sf U}_i \in \real^{m \times mN}, \Delta^{\sf Y}_i \in \real^{p \times mN}, \Delta^{\sf A}_i \in \real^{p \times n_\eta}$ for $i \in \integer_{[0,N)}$ and $\Delta^{\sf M} \in \real^{pN \times n_\eta}$ in \eqref{Eq:Lambda_definition} are what follows,
\begin{align*} \begin{aligned}
    & \col \big( \Delta^{\sf U}_0, \ldots, \Delta^{\sf U}_{N-1} \big) := I_{mN} \\
    & \col \big( \Delta^{\sf Y}_0, \ldots, \Delta^{\sf Y}_{N-1} \big) := \Xi(A) \, (I_N \otimes B) \\
    & \col \big( \Delta^{\sf A}_0, \ldots, \Delta^{\sf A}_{N-1} \big) := [\Theta(A), \Xi(A), I_{pN}] \\
    &\Delta^{\sf M} := [\Theta(A_{\sf L}), \Xi(A_{\sf L}), I_{pN} - \Xi(A_{\sf L}) \, (I_N \otimes L_{\sf L})]
\end{aligned} \end{align*}
where we let $\Theta(A) := \col(C, CA, \ldots, CA^{N-1}) \in \real^{pN \times n}$, $\Xi(A) := \Toep(0_{p\times n}, C, CA, \ldots, CA^{N-2}) \in \real^{pN \times nN}$, and similarly define $\Theta(A_{\sf L}), \Xi(A_{\sf L})$ with $A_{\sf L} := A - L_{\sf L} C$.

The matrices $\mathbf{\Delta}^{\sf U}_i, \mathbf{\Delta}^{\sf Y}_i, \mathbf{\Delta}^{\sf A}_i, \mathbf{\Delta}^{\sf M}$ in \eqref{Eq:DD:Lambda_definition} are computed (with underlying $\mathbf{\Theta}(\mathbf{A}), \mathbf{\Xi}(\mathbf{A}), \mathbf{A}_{\sf L}$) in the same way as above, with $A, B, C, L_{\sf L}, n$ replaced by $\mathbf{A}, \mathbf{B}, \mathbf{C}, \mathbf{L}_{\sf L}, n_\auxiliary$, respectively.

\Versions{}{{\tgrey \section{Proof of \eqref{Eq:PROOF:equivalence:target}} \label{APPENDIX:proof}

\begin{proof}
The relation $\thickbar y_t = \thickbar{\mathbf{y}}_t$ in \eqref{Eq:PROOF:equivalence:target} was established in \cite[Claim 7.7]{SDDPC}. The other relation in \eqref{Eq:PROOF:equivalence:target} is equivalent to
\begin{align*} \begin{aligned}
    \Delta^{\sf U}_{t-k} &= \mathbf\Delta^{\sf U}_{t-k}, &
    \Delta^{\sf M} \Sigma^\eta (\Delta^{\sf M})^\transpose &= \mathbf\Delta^{\sf M} \mathbf\Sigma^\eta (\mathbf\Delta^{\sf M})^\transpose, \\
    \Delta^{\sf Y}_{t-k} &= \mathbf\Delta^{\sf Y}_{t-k}, &
    \Delta^{\sf A}_{t-k} \Sigma^\eta (\Delta^{\sf A}_{t-k})^\transpose &= \mathbf\Delta^{\sf A}_{t-k} \mathbf\Sigma^\eta (\mathbf\Delta^{\sf A}_{t-k})^\transpose,
\end{aligned} \end{align*}
via the definitions of $\Lambda_t$ and $\mathbf{\Lambda}_t$ in \eqref{Eq:Lambda_definition} and \eqref{Eq:DD:Lambda_definition}. Given the definitions in \ref{APPENDIX:Delta_Lambda_definition}, the above relations are implied by
\begin{enumerate}[1)]
    \item $C A^q B = \mathbf{C} \mathbf{A}^q \mathbf{B}$ for $q \in \integer_{[0,N)}$,
    \item $C A^q \Sigma^{\sf x} (C A^r)^\transpose = \mathbf{C} \mathbf{A}^q \mathbf{\Sigma}^{\sf x} (\mathbf{C}\mathbf{A}^r)^\transpose$ for $q,r \in \integer_{[0,N)}$,
    \item $C A^q \Sigma^{\sf w} (C A^r)^\transpose = \mathbf{C} \mathbf{A}^q \mathbf{\Sigma}^{\sf w} (\mathbf{C}\mathbf{A}^r)^\transpose$ for $q,r \in \integer_{[0,N-2]}$,
    \item $C A_{\sf L}^q \Sigma^{\sf x} (C A_{\sf L}^q)^\transpose = \mathbf{C} \mathbf{A}_{\sf L}^q \mathbf{\Sigma}^{\sf x} (\mathbf{C} \mathbf{A}_{\sf L}^r)^\transpose$ for $q,r \in \integer_{[0,N)}$,
    \item $C A_{\sf L}^q \Sigma^{\sf w} (C A_{\sf L}^q)^\transpose = \mathbf{C} \mathbf{A}_{\sf L}^q \mathbf{\Sigma}^{\sf w} (\mathbf{C} \mathbf{A}_{\sf L}^r)^\transpose$ for $q,r \in \integer_{[0,N-2]}$,
    \item $C A_{\sf L}^q L_{\sf L} = \mathbf{C} \mathbf{A}_{\sf L}^q \mathbf{L}_{\sf L}$ for $q \in \integer_{[0,N-2]}$,
\end{enumerate}
where the relations 1)--6) can be shown given the equalities
\begin{align*} \begin{aligned}
    A \Phi \Phi_\auxiliary &= \Phi \mathbf{A} \Phi_\auxiliary, &
    B &= \Phi \mathbf{B}, &
    L_{\sf L} &= \Phi \mathbf{L}_{\sf L}, \\
    \mathbf{A} \Phi_\auxiliary &= \Phi_\auxiliary \widetilde A, &
    \mathbf{B} &= \Phi_\auxiliary \widetilde B, & 
    \mathbf{L}_{\sf L} &= \Phi_\auxiliary \widetilde L_{\sf L}, \\
    C \Phi \Phi_\auxiliary &= \mathbf{C} \Phi_\auxiliary, &
    \Sigma^{\sf w} &= \Phi \mathbf{\Sigma}^{\sf w} \Phi^\transpose, &
    \Sigma^{\sf x} &= \Phi \mathbf{\Sigma}^{\sf x} \Phi^\transpose,
\end{aligned} \end{align*}
established with some matrices $\Phi, \widetilde A, \widetilde B, \widetilde L_{\sf L}$ according to \cite[Claim 5.1, Claim 7.1, Claim 7.4]{SDDPC}.
\end{proof} }}

\end{document}